\documentclass[a4paper,12pt]{article} 
\usepackage[onehalfspacing]{setspace}%\usepackage[T1]{fontenc}
\usepackage{color}
\usepackage{enumitem}
\usepackage{amsmath,amssymb,amsthm}
\usepackage{subcaption}
\captionsetup{compatibility=false}

\usepackage{graphicx}
\usepackage{algorithm}
\usepackage{algpseudocode}
\usepackage{color}
\usepackage{anysize}
\usepackage{multirow}
\usepackage{authblk}
%\usepackage{lineno}
%\linenumbers

\newtheorem{theorem}{Theorem}[section]
\newtheorem{lemma}[theorem]{Lemma}

\newtheorem{corollary}[theorem]{Corollary}

\theoremstyle{remark}

% Miscellaneous commands and shortcuts

%\newtheorem{theorem}{Theorem}[section]
%\newtheorem{lemma}[theorem]{Lemma}
%\newtheorem{corollary}[theorem]{Corollary}

\title{Approximation Algorithms For The Euclidean Dispersion Problems}

\author{Pawan K. Mishra\thanks{Department of Computer Science and Engineering,
        Indian Institute of Technology Guwahati}, Gautam K. Das \thanks{Department of Mathematics,   Indian Institute of Technology Guwahati}}

\begin{document}

\maketitle

\begin{abstract}

	In this article, we consider the Euclidean dispersion problems. 
	Let $P=\{p_{1}, p_{2}, \ldots, p_{n}\}$ be a set of $n$ points in $\mathbb{R}^2$. For each point $p \in P$ and $S \subseteq P$, we define $cost_{\gamma}(p,S)$ as the sum of Euclidean distance from $p$ to the nearest $\gamma $ point in $S \setminus \{p\}$. We define $cost_{\gamma}(S)=\min_{p \in S}\{cost_{\gamma}(p,S)\}$ for  $S \subseteq P$. In the $\gamma$-dispersion problem, a set $P$ of $n$ points in $\mathbb{R}^2$ and a positive integer $k \in [\gamma+1,n]$ are given. The objective is to find a subset $S\subseteq P$ of size $k$ such that $cost_{\gamma}(S)$ is maximized. We consider both $2$-dispersion and $1$-dispersion problem in $\mathbb{R}^2$. Along with these, we also consider $2$-dispersion problem when points are placed on a line.

	In this paper, we propose a simple polynomial time $(2\sqrt 3 + \epsilon )$-factor approximation algorithm for the $2$-dispersion problem, for any $\epsilon > 0$, which is an improvement over the best known approximation factor $4\sqrt3$ [Amano, K. and Nakano, S. I., An approximation algorithm for the $2$-dispersion problem, IEICE Transactions on Information and Systems, Vol. 103(3), pp. 506-508, 2020]. Next, we develop a common framework for designing an approximation algorithm for the Euclidean dispersion problem. With this common framework, we improve the approximation factor  to $2\sqrt 3$ for the $2$-dispersion problem in $\mathbb{R}^2$.  Using the same framework, we propose a polynomial time algorithm, which returns an optimal solution  for the $2$-dispersion problem when  points are  placed on a line. Moreover, to show the effectiveness of the  framework, we  also propose  a $2$-factor approximation algorithm for the $1$-dispersion problem in $\mathbb{R}^2$.

\end{abstract}

\section{Introduction} \label{sec:intro}

The facility location problem is one of the extensively studied optimization problems.   Here, we are given a set of locations on which facilities can be placed and a positive integer $k$, and the goal is to place $k$ facilities on those locations so that a specific objective is satisfied. For example, the objective is to  place these facilities such that their closeness is undesirable.  Often, this closeness measured as a function of the distances between a pair of facilities. We refer to such facility location problem as a dispersion problem. More specifically, we wish to minimize the interference between the placed facilities. The most studied dispersion problem is the \emph{max-min dispersion problem}.

In the \emph{max-min dispersion problem},  we are given a set $ P = \{p_{1}, p_{2}, \ldots, p_{n}\}$ of $n$ locations,  the non-negative distances between each pair of locations $p,q \in P$, and a positive integer $k$ ($k \leq n$). Here, $k$ refers to the number of facilities to be opened and distances are assumed to be symmetric. The objective is to find a $k$ size subset $S \subseteq P$  of locations such that $cost(S)= \min \{d(p,q) \mid p,q \in S \}$ is maximized, where $d(p,q)$ denotes the distance between $p$ and $q$.  This problem is known as $1$-dispersion problem in the literature. 
%Now, we define $\gamma$-dispersion problem as follows: 
%
%\textbf{\emph{$\gamma$-dispersion problem:}} \emph{ Let $P=\{p_{1}, p_{2}, \ldots, p_{n}\}$ be a set of $n$ points in $\mathbb{R}^2$. For each point $p \in P$ and $S \subseteq P$, we define $cost_{\gamma}(p,S)$ as the sum of Euclidean distance from $p$ to the nearest $\gamma $ point in $S \setminus \{p\}$. We define $cost_{\gamma}(S)=\min_{p \in S}\{cost_{\gamma}(p,S)\}$ for  $S \subseteq P$. In the $\gamma$-dispersion problem, a set $P$ of $n$ points in $\mathbb{R}^2$ and a positive integer $k \in [\gamma+1,n]$ are given. The objective is to find a subset $S\subseteq P$ of size $k$ such that $cost_{\gamma}(S)$ is maximized.}
In this article, we consider a variant of the max-min dispersion problem. We refer to it as a $2$-dispersion problem. Now, we define $2$-dispersion problem as follows: 

\textbf{\emph{2-dispersion problem:}}  \emph{Let $P=\{p_{1}, p_{2}, \ldots, p_{n}\}$ be a set of $n$ points in $\mathbb{R}^2$. For each point $p \in P$ and $S \subseteq P$, we define $cost_{2}(p,S)$ as the sum of Euclidean distance from $p$ to the first closest point in $S \setminus \{p\}$ and the second closest point in $S\setminus \{p\}$. We also define $cost_{2}(S)=\min_{p \in S}\{cost_{2}(p,S)\}$ for each $S \subseteq P$. In the $2$-dispersion problem, a set $P$ of $n$ points in $\mathbb{R}^2$ and a positive integer $k \in [3,n]$ are given. The objective is to find a subset $S\subseteq P$ of size $k$ such that $cost_{2}(S)$ is maximized. }

We find an immense number of applications for the dispersion problem in the real world. The situation in which we want to open chain stores in a community has generated our interest in the dispersion issue. In order to eliminate/prevent self-competition, we need to open stores far away from each other. Another situation in which the issue of dispersion occurs is  installing  hazardous structures, such as nuclear power plants and oil tanks. These facilities need to be dispersed to the fullest degree possible so that an accident at one of the facilities would not affect  others. The dispersion problem also has its application in  information retrieval where we need to find a small subset of data with some desired variety from an extensive data set such that a small subset is a reasonable sample to overview the large data set. 

\section{Related Work} \label{sec: relatedwork}

In 1977, Shier \cite {shier1977min} in his study on the $k$-center problem on a tree, studied the max-min dispersion problem on trees. In 1981, Chandrasekharan and Daughety \cite{chandrasekaran} studied  max-min dispersion problem on a tree network. The max-min dispersion problem is NP-hard even when the distance function satisfies  triangular  inequality \cite{erkut1990}. Wang and Kuo \cite{wang1988study} introduced the geometric version of the max-min dispersion problem. They consider the problem in the $d$-dimensional space where the distance between two points is Euclidean. They proposed a dynamic programming that solves the problem for $d=1$ in $O(kn)$ time.  They also proved that the problem is NP-hard for $d=2$. 
%	White \cite{white1} studied max-min dispersion problem in metric space $\mathbb{R}^n$ and proposed a  $13$-factor approximation result, and also proposed a  $12$-factor result for certain values of $k$.In \cite{white2}, White improved the result to $3$-factor. 
%In \cite{white2}, White studied max-min dispersion problem in metric space $\mathbb{R}^n$ and proposed a  $3$-factor approximation result.
In \cite{white2}, White studied  the max-min dispersion problem and proposed  a  $3$-factor approximation result. Later in 1994, Ravi et al. \cite{ravi}  also studied max-min dispersion problem where they  proposed a $2$-factor approximation algorithm when the distance function satisfies  triangular inequality. Moreover, they showed that when distance satisfies the triangular inequality, the problem cannot be approximated within the factor of $2$ unless $P=NP$. 

Recently, in \cite{akagi}, the exact algorithm for the problem was shown by establishing a  relationship between the max-min dispersion problem and the maximum independent set problem. They proposed an $O(n^{wk/3} \log n)$ time where $w < 2.373$. In \cite{akagi}, Akagi et al. also studied two special cases where a set of $n$ points lies on a line and  a set of $n$ points lies on a circle separately. They proposed a polynomial time exact algorithm for both special cases. 

The other popular variant of the dispersion problem is \emph{max-sum $k$-dispersion problem}. Here, the objective is to maximize the sum of distances between $k$ facilities. Erkut \cite{erkut1990} idea's can be  adapted to show that the problem is NP-hard. Ravi et al. \cite{ravi} gave a polynomial time exact algorithm when the points are placed on a line. They also proposed a $4$-factor approximation algorithm if the distance function satisfies triangular inequality. In \cite{ravi}, they also proposed a $(1.571+ \epsilon)$-factor approximation algorithm for $2$-dimensional Euclidean space, where $\epsilon > 0$. In \cite{birnbaum} and \cite{hassin}, the approximation factor of $4$ was improved to $2$.  One can see \cite{baur2001approximation} and \cite{chandra} for other variations of the dispersion problems.
In comparison with max-min dispersion ($1$-dispersion) problem, a handful amount of research has been done in $2$-dispersion problem.  Recently, in 2018, Amano and Nakano \cite{amano} proposed a greedy algorithm, which produces an $8$- factor approximation result.  In 2020, \cite{amano2020} they analyzed the same greedy algorithm proposed in \cite{amano} and proposed a $4\sqrt{3}(\approx 6.92)$-factor approximation result.

\subsection{Our Contribution}
In this article, we first consider the $2$-dispersion problem in $\mathbb{R}^2$ and  propose a simple polynomial time  $(2\sqrt3+ \epsilon)$-factor approximation algorithm for any $\epsilon >0$. The best known result in the literature is $4\sqrt3$-factor approximation algorithm\cite{amano2020}. We also develop a common framework that improves the approximation factor to $2\sqrt3$ for the same problem.  We present a polynomial time optimal algorithm for $2$-dispersion problem if the input points lies on a line. Though a $2$-factor approximation algorithm available in the literature for the $1$-dispersion problem in $\mathbb{R}^2$\cite{ravi}, but to show the effectiveness of the proposed common framework, we  propose a $2$-factor approximation algorithm for the  $1$-dispersion problem using the developed framework. 

\subsection{Organization of the Paper}
The remainder of the paper is organized as follows. In Section \ref{section3}, we propose a $(2\sqrt3 +\epsilon)$-factor approximation algorithm for the $2$-dispersion problem in $\mathbb{R}^2$, where $\epsilon > 0$. In Section \ref{section4}, we propose a common framework for the dispersion problem.  Using the framework, followed by   $2\sqrt3$-factor approximation result  for the $2$-dispersion problem in  $\mathbb{R}^2$, a  polynomial time optimal algorithm for the $2$-dispersion problem on a line and  $2$-factor approximation result for the $1$-dispersion problem in $\mathbb{R}^2$. Finally, we conclude the paper in Section \ref{section5}.

\section{$(2\sqrt3 +\epsilon)$-Factor Approximation Algorithm}\label{section3}
In this section, we propose a $(2\sqrt3 +\epsilon)$-factor approximation algorithm for the $2$-dispersion problem, for any $\epsilon > 0$. Actually, we consider the same algorithm proposed in \cite{amano2020}, but using different argument, we will show that for any $\epsilon > 0$, it is a $(2\sqrt3+\epsilon)$-factor approximation algorithm. For completeness of this article, we prefer to discuss the algorithm briefly as follows. Let $I=(P, k)$ be an arbitrary instance of the $2$-dispersion problem, where $P=\{p_{1}, p_{2}, \ldots p_{n}\}$  is  the set of $n$ points in $\mathbb{R}^2$ and $k \in [3,n]$ is a positive integer. Initially, we choose a subset $S_{3} \subseteq S$ of size $3$ such that $cost_{2}(S_{3})$ is maximized. Next, we add one point  $p \in P$ into $S_{3}$ to construct $S_{4}$, i.e., $S_{4}=S_{3} \cup \{p\}$, such that $cost_{2}(S_4)$ is maximized and continues this process up to the construction of $S_{k}$. The pseudo code of the algorithm is described in Algorithm \ref{GreedyDispersionAlgorithm}. 
\begin{algorithm}[h]
	
	\caption{GreedyDispersionAlgorithm$(P, k)$}
	\textbf{Input: } A set $P=\{p_1, p_2, \ldots, p_n\}$ of $n$ points, and a positive integer  $k (3 \leq k \leq n)$.
	
	\textbf{Output:} A subset $S_k \subseteq P$ of size $k$.	
	\begin{algorithmic}[1]
		\State Compute $\{p_{i_1}, p_{i_2}, p_{i_3}\} \subseteq P$ such that $cost_2(S_3)$ is maximized. \label{line1algo1}
		\State $S_3 = \{p_{i_1}, p_{i_2}, p_{i_3}\}$
		\For {($j = 4, 5 \ldots, k$)} \label{line2algo}
		\State Let $p \in P\setminus S_{j-1}$ such that $cost_2(S_{j-1} \cup \{p\})$ is maximized. \label{greedy1}
		\State $S_j \leftarrow S_{j-1} \cup \{p\}$
		\EndFor
		\State return $(S_k)$
	\end{algorithmic}  \label{GreedyDispersionAlgorithm}
\end{algorithm}

\begin{theorem}
	For any $\epsilon > 0$, Algorithm \ref{GreedyDispersionAlgorithm} produces $(2\sqrt{3}+\epsilon)$-factor approximation result in polynomial time.
\end{theorem}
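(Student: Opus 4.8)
For the running time, Step~\ref{line1algo1} scans all $\binom n3$ triples (each evaluated in $O(1)$ time) and each of the $k-3$ passes of the loop tries every candidate in $P\setminus S_{j-1}$ and recomputes $cost_2$ of a set of size at most $k$, so the algorithm is polynomial; I therefore concentrate on the ratio and argue by contradiction. Fix an optimal solution $S^{*}$, put $OPT=cost_2(S^{*})$ and $\lambda=cost_2(S_k)$, and suppose $\lambda<\frac{OPT}{2\sqrt3+\eps}$. Two elementary facts recur: adjoining a point to a set only shrinks every old point's two nearest-neighbour distances (the newcomer just adds one more term to the outer minimum), so $cost_2$ is non-increasing along $S_3,S_4,\dots,S_k$, and removing points from a set never decreases its $cost_2$; in particular $cost_2(S_3)$, being the maximum over triples, is at least $cost_2$ of any three points of $S^{*}$ and hence at least $OPT$.

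The geometric core is a packing lemma: \emph{if $cost_2(X)\ge c$ then every open disk of radius $\frac c{2\sqrt3}$ contains at most two points of $X$.} To prove it, suppose three points $a,b,c$ of $X$ lie in such a disk. At vertex $a$ the two smallest distances from $a$ to other points of $X$ sum to $\ge c$, and since $b$ and $c$ are two distinct such points, $d(a,b)+d(a,c)\ge c$; likewise at $b$ and $c$. Thus in triangle $abc$ the two sides at every vertex sum to $\ge c$, so its two shortest sides sum to $\ge c$ and its two longest are each $\ge\frac c2$. If $abc$ is right or obtuse, its longest side $z$ and the other two sides $x,y$ satisfy $z^{2}>x^{2}+y^{2}\ge\frac{(x+y)^{2}}2\ge\frac{c^{2}}2$, so its minimum enclosing disk has radius $\frac z2>\frac c{2\sqrt2}>\frac c{2\sqrt3}$. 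If $abc$ is acute with circumradius $R$, its largest angle is $\ge60^{\circ}$, so the remaining angles $A,B$ obey $A+B\le120^{\circ}$, whence $\sin A+\sin B=2\sin\frac{A+B}2\cos\frac{A-B}2\le2\sin60^{\circ}=\sqrt3$; since the two shortest sides are $2R\sin A$ and $2R\sin B$, their sum $\ge c$ forces $R\ge\frac c{2\sqrt3}$. Either way the three points fit in no disk of radius below $\frac c{2\sqrt3}$, which proves the lemma.

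Now to the algorithm. Since $cost_2(S_3)\ge OPT>\frac{OPT}{2\sqrt3+\eps}$ but $cost_2(S_k)=\lambda<\frac{OPT}{2\sqrt3+\eps}$, there is a first index $j_0\ge4$ with $cost_2(S_{j_0})<\frac{OPT}{2\sqrt3+\eps}$. Put $T=S_{j_0-1}$: it is $\frac{OPT}{2\sqrt3+\eps}$-dispersed, has $|T|=j_0-1\le k-1$ points, and, since Step~\ref{greedy1} chose the best newcomer at step $j_0$, \emph{every} point of $P\setminus T$ pushes $cost_2$ below $\frac{OPT}{2\sqrt3+\eps}$ when adjoined to $T$. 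A point at distance $\ge\frac{OPT}{2\sqrt3+\eps}$ from all of $T$ could, however, be adjoined harmlessly (its two nearest distances would each be that large, and no point of $T$ could be dragged below the threshold), so every point of $P$ — in particular all $k$ points of $S^{*}$ — lies within $\frac{OPT}{2\sqrt3+\eps}$ of $T$. I would then read off the reason an adjunction fails: adjoining $p$ fails because either (i) $p$ already has two points of $T$ within $\frac{OPT}{2\sqrt3+\eps}$ of it, or (ii) adjoining $p$ drops some $q\in T$ below the threshold, which forces both $p$ and some other $q'\in T$ to lie within $\frac{OPT}{2\sqrt3+\eps}$ of $q$, hence forces $q,q'$ within $\frac{OPT}{2\sqrt3+\eps}$ of each other.

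The hard part is the ensuing counting, where a bare pigeonhole will not do: $k$ points of $S^{*}$ distributed among $\le k-1$ disks with at most two per disk (the packing bound) is entirely consistent, so the contradiction must exploit the dichotomy. The plan is to charge each point of $S^{*}$ that triggers alternative (ii) against a close pair inside $T$, bound how many points can be charged this way by invoking the packing lemma on the $\frac{OPT}{2\sqrt3+\eps}$-dispersed set $T$ itself, and conclude that enough points of $S^{*}$ must instead satisfy alternative (i) — i.e.\ lie in disks of radius $\frac{OPT}{2\sqrt3+\eps}$ about points of $T$ — that one such disk contains three points of $S^{*}$, contradicting the packing lemma (indeed $\frac{OPT}{2\sqrt3+\eps}<\frac{OPT}{2\sqrt3}$). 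This also explains why $\eps>0$ is required: the packing bound is tight, the equilateral triangle of side $\frac c2$ lying exactly on the radius-$\frac c{2\sqrt3}$ circle, so it can only be invoked with a strictly smaller radius, and it is exactly this accounting slack that the common framework of Section~\ref{section4} later reorganises away to obtain the clean constant $2\sqrt3$.
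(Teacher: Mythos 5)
Your packing lemma (three points of a set $X$ with $cost_{2}(X)\ge c$ cannot all lie in an open disk of radius $c/(2\sqrt3)$) is correct, and your circumradius/obtuse-triangle case analysis is actually more rigorous than the paper's appeal to ``the packing argument inside a disk'' in its Lemma~\ref{lemma01}. The contradiction setup --- take the first index $j_{0}$ where the cost drops below $\rho=\frac{cost_{2}(OPT)}{2\sqrt3+\epsilon}$, let $T=S_{j_0-1}$, and observe that every point of $P\setminus T$ must fail to be adjoinable --- is a sound contrapositive restatement of the paper's inductive step. But the decisive step is missing: you explicitly leave the counting as a ``plan,'' and the plan as stated does not work. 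You propose to bound the type-(ii) charges ``by invoking the packing lemma on the $\rho$-dispersed set $T$ itself,'' but that lemma only controls disks of radius $\rho/(2\sqrt3)$, whereas a type-(ii) failure only tells you that some pair $q,q'\in T$ is at distance $<\rho$; nothing in the hypothesis $cost_{2}(T)\ge\rho$ limits how many points of $T$ have a single near neighbour, so this charging scheme has no bound. Likewise, ``one such disk contains three points of $S^{*}$'' is not what a correct count yields.

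Two observations close the gap, and they are exactly what the paper's Lemmas~\ref{lemma01} and~\ref{lemma02} supply. First, your dichotomy collapses: in case (ii), the triangle inequality gives $d(p,q')\le d(p,q)+d(q,q')<\rho$, so the failing point $p$ itself has \emph{two} points of $T$ (namely $q$ and $q'$) in its open $\rho$-disk --- the same conclusion as case (i). Second, the double count must be run with disks centred at the points of $S^{*}$, counting points of $T$ inside them: each point of $T$ lies in at most two such disks, because if it lay in three, the $\rho$-disk around it would contain three points of $S^{*}$, contradicting your packing lemma (here is where $\rho<\frac{cost_{2}(OPT)}{2\sqrt3}$, i.e.\ $\epsilon>0$, is used). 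Then ``every point of $S^{*}$ has at least two $T$-points in its disk'' forces at least $2k$ incidences, while the $T$-side allows at most $2|T|\le 2(k-1)$ --- a genuine contradiction, showing some $p_{j}\in S^{*}$ has at most one point of $T$ in its disk, and one then checks (as the paper does, again via the triangle inequality) that adjoining that $p_{j}$ keeps the cost at least $\rho$. Without these two steps your argument does not reach a contradiction, so as written the proof is incomplete at its core.
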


\begin{proof}
	Let $I = (P, k)$ be an arbitrary input instance of the $2$-dispersion problem, where $P=\{p_1, p_2, \ldots, p_n\}$ is the set of $n$ points  and $k$ is a positive integer. Let $S_k$ and $OPT$ be the output of Algorithm \ref{GreedyDispersionAlgorithm} and optimum solution, respectively, for the instance $I$. To prove the theorem, we have to show that $\frac{cost_2(OPT)}{cost_2(S_k)} \leq 2\sqrt3+\epsilon$. Here we use induction to show that $cost_2(S_i) \geq \frac{cost_2(OPT)}{2\sqrt3+\epsilon}$ for each $i = 3, 4, \ldots, k$. Since $S_3$ is an optimum solution for 3 points (see line number \ref{line1algo1} of Algorithm \ref{GreedyDispersionAlgorithm}), therefore $cost_2(S_3) \geq cost_{2}(OPT) \geq  \frac{cost_2(OPT)}{2\sqrt3+\epsilon}$ holds. Now, assume that the condition holds for  each $i$ such that $3 \leq i < k$. We will prove that the condition holds for $(i+1)$ too.
	
%	Let $S_i = \{p_1, p_2, \ldots, p_i\}$. %and
	  %$OPT = \{p_1^*, p_2^*, \ldots, p_k^*\}$. Without loss of generality assume that $p_j^* = p_j$ for $1 \leq j \leq k$.
	   Now, we define a disk $ D_i$ centered at each $p_i \in P$ as follows: $ D_i = \{p_\ell \in \mathbb{R}^2|d(p_{i}, p_{\ell}) \leq \frac{cost_2(OPT)}{2\sqrt{3}+\epsilon}\}$. Let $D^*$ be a set of disks corresponding  to  each point  in $OPT$. A point $p_{j}$ is \emph{contained} in $D_{i}$, if $d(p_{i},p_{j})\leq \frac{cost_2(OPT)}{2\sqrt{3}+\epsilon}$.
	\begin{lemma} \label{lemma01}
		For any point $p_{i} \in P$, $|D_{i} \cap OPT | \leq 2$. 
	\end{lemma} 
	\begin{proof}
		On the  contrary assume that three points $p_{a}, p_{b}, p_{c} \in D_{i} \cap OPT$. Let $S=\{p_{a}, p_{b}, p_{c}\}$. Without loss of generality assume that $cost_{2}(p_{a}, S) \leq cost_{2}(p_{b}, S)$ and $cost_{2}(p_{a}, S) \leq cost_{2}(p_{c}, S)$, i.e., $d(p_{a}, p_{b})+ d(p_{a}, p_{c}) \leq d(p_{a}, p_{b})+ d(p_{b}, p_{c})$ and $d(p_{a}, p_{b})+ d(p_{a}, p_{c}) \leq d(p_{a}, p_{c})+ d(p_{b}, p_{c})$, which leads to $d(p_{a}, p_{b}) \leq d(p_{b}, p_{c})$ and $d(p_{a}, p_{c}) \leq d(p_{b}, p_{c})$. We notice that maximizing $d(p_{a},p_{b})+d(p_{a},p_{c})$ results in minimizing $d(p_{b},p_{c})$(see Figure \ref{figure1}). The minimum value of $d(p_{b}, p_{c})$ is $\sqrt3 \frac{cost_2(OPT)}{2\sqrt{3}+\epsilon}$ as both $d(p_{a}, p_{b})$ and $d(p_{a}, p_{c})$ is less than equal to $d(p_{b}, p_{c})$. Therefore, from the packing argument inside a disk,  $d(p_{a},p_{b})+d(p_{a},p_{c})$ is maximum 
		if $p_{a},p_{b}, p_{c}$ are on an equilateral triangle and on the boundary of the disk $D_{i}$. Then, $cost_{2}(S) \leq d(p_{a},p_{b})+d(p_{a},p_{c}) \leq \sqrt 3 \frac{cost_2(OPT)}{2\sqrt{3}+\epsilon}+\sqrt 3 \frac{cost_2(OPT)}{2\sqrt{3}+\epsilon}=2\sqrt3 \frac{cost_2(OPT)}{2\sqrt{3}+\epsilon} < cost_{2}(OPT)$, which leads to a contradiction to the optimal value $cost_{2}(OPT)$. Therefore for any $p_i \in P$, $D_i$ contains at most two points from the optimal set $OPT$.
		
		\begin{figure}[h!]
			\centering
			\includegraphics[scale=1]{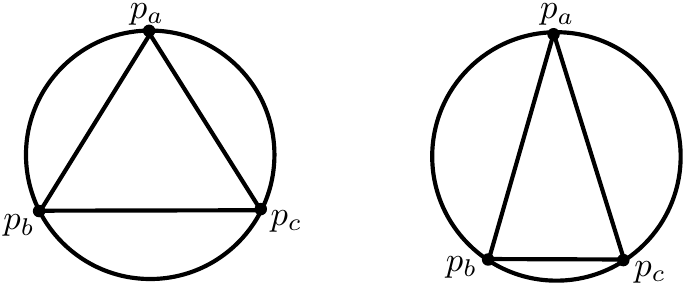}
			\caption{Points $p_{a}, p_{b}, p_{c} \in D_{i} $} \label{figure1}

		\end{figure}
	\end{proof}
%	\begin{lemma} \label{keylemma}
%		 For any point $p \in P$,  if $ D' \subseteq  D^*$ is the set of disks that contains $p$, then $| D'| \leq 2$.  
%	\end{lemma}
%
%    \begin{proof}
%    On the contrary assume that  $| D'| \geq 3$ and without loss of generality assume that let $\{ D_a, D_b,  D_c\} \subseteq  D'$ are three disks that properly contains $p$. 
%     Here, $ D_a, D_b$ and $D_c$ are  centered on three optimal points $p_{a}, p_{b}$ and $p_{c}$, respectively. 
%     	Since  $p$ is  contained in $ D_a, D_b$ and $ D_c$, therefore each  $d(p_{a}, p), d(p_{b},p)$ and $d(p_{c}, p)$ is  less than or equal to $\frac{cost_{2}(OPT)}{2\sqrt3+\epsilon} $.     
%    So, the disk $ D^p=\{q \in \mathbb{R}^2 \mid d(p,q) \leq \frac{cost_{2}(OPT)}{2\sqrt3+\epsilon}\}$ contains three points $p_{a}, p_{b}$ and $p_{c}$ of the optimal set $OPT$, which is a contradiction to Lemma \ref{lemma01}. Thus, the lemma.
%    \end{proof}
	
	\begin{lemma} \label{lemma02}
		For some   $p_{j}  \in OPT$,	$| D_j \cap S_i| < 2$. 
	\end{lemma}
	\begin{proof}
		On the contrary assume that there does not exist any $j \in [1,k]$ such that $| D_j \cap S_i| < 2$. Let $ D^*=\{D_{i} \mid p_{i} \in OPT\}$.
		Construct a bipartite graph $H(S_i \cup  D^*, {\cal E})$ as follows: (i) $S_i$ and $ D^* = \{ D_1,  D_2, \ldots, D_k\}$ are two partite vertex sets, and (ii) for $u_{i}' \in S_{i}$, $(u_i', D_j) \in {\cal E}$ if and only if $u_i'$ is contained in $ D_j$. According to assumption, each disk $D_j$ contains at least 2 points from $S_i$. Therefore, the total degree of the vertices in $ D^*$ in $H$ is at least $2k$. Note that $| D^*| = k$. On the other hand, the total degree of the vertices in $S_i$ in   $H$ is at most $2 \times |S_i|$ (see Lemma \ref{lemma01}). Since $|S_i| < k$ (based on the assumption of the induction hypothesis), the total degree of the vertices in $S_i$ in   $H$ is less than $2k$, which leads to a contradiction that the total degree of the vertices in $D^*$ in  $H$ is at least $2k$. Thus, there exist at least one $p_{j} \in OPT$ such that $|D_{j} \cap S_i| < 2$. 
	\end{proof}
	Without loss of generality,  assume that  disk $ D_j \in  D^*$ has at most one point from the set $S_{i}$. Suppose $ D_j$ contains only one point of the set $S_{i}$, then the distance of $p_{j}$ to the second closest point in $S_{i}$ is greater than $ \frac{cost_{2}(OPT)}{2\sqrt3 +\epsilon}$(see Figure \ref{figure2} ). Also, from triangular inequality $d(p_{i}, p_{j}) +d(p_{i}, p_{\ell}) > \frac{cost_{2}(OPT)}{2\sqrt3 +\epsilon}$ for each point $p_{\ell} \in S_{i}$. So, we can add the point $p_{j} \in OPT$ to the set $S_{i}$ to construct set $S_{i+1}$. Here, $S_{i+1}=S_{i} \cup \{p_{j}\}$. Therefore, the cost of $S_{i+1} \geq \frac{cost_{2}(OPT)}{2\sqrt3 +\epsilon}$. 
	
	Now,  assume that $D_j$ does not contain any point from the set $S_{i}$, then the distance of the point $p_{j} \in OPT $ to any point of $S_{i}$ is greater than $ \frac{cost_{2}(OPT)}{2\sqrt3 +\epsilon}$. By adding the point $p_{j}$ in the set $S_{i}$, we construct the set $S_{i+1}$, which leads to the $cost_{2}(S_{i+1})\geq \frac{cost_{2}(OPT)}{2\sqrt3 +\epsilon}$. 
		Since our algorithm chooses a point (see line number \ref{greedy1} of Algorithm \ref{GreedyDispersionAlgorithm}) that maximizes $cost_{2}(S_{i+1})$, therefore algorithm will always choose a point in the iteration $i+1$ such that  $cost_{2}(S_{i+1}) \geq \frac{cost_{2}(OPT)}{2\sqrt3 +\epsilon }$. 
		
	By the help of Lemma \ref{lemma01}  and Lemma \ref{lemma02}, we can conclude that the $cost_{2}(S_{i+1})\geq \frac{cost_{2}(OPT)}{2\sqrt3 +\epsilon}$ and thus condition holds for $(i+1)$ too.  
	
	Therefore, for any $\epsilon > 0$, Algorithm \ref{GreedyDispersionAlgorithm} produces $(2\sqrt{3}+\epsilon)$-factor approximation result in polynomial time.
	\begin{figure}[h!]
		\centering
		\includegraphics[scale=1]{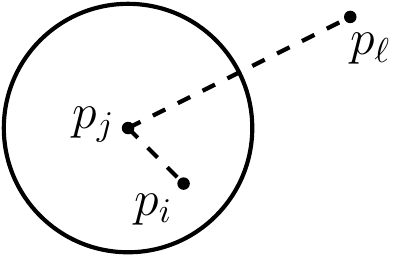}
		\caption{Points $p_{j}, p_{i} \in D_{j}$ and $p_{\ell}$ outside the disk $D_{j}$ }  \label{figure2} 
	\end{figure}
\end{proof}
\section{ An  Algorithm for the Dispersion Problem} \label{section4}
In this section, we propose an algorithm  for the dispersion problem. 
It is a common algorithm for $1$-dispersion, $2$-dispersion problem in $\mathbb{R}^2$ and $1$-dispersion/$2$-dispersion problem in $\mathbb{R}$. Input of the algorithm are (1) a set $P=\{p_1, p_2, \ldots p_n\}$ of $n$ points, (2) an integer $\gamma(=1$or $2)$ for the $\gamma$-dispersion problem, and (3) an integer $k (\gamma+1\leq k\leq n)$. In the first line of the algorithm, we set the value of a constant $\lambda$. If $\gamma=2 $ and points are in $\mathbb{R}^2$  (resp. $\mathbb{R}$), then we set $\lambda=2\sqrt3$  (resp. $\lambda=1)$, and if $\gamma=1$  and points are in $\mathbb{R}^2$, then we set $\lambda=2$. We prove that the algorithm is $\lambda$-factor approximation algorithm. We use $S_{i} (\subseteq P)$ to denote a set of size $i$.
We start algorithm with $S_{\gamma+1} \subseteq P$ containing $\gamma+1$ points as a solution set. Next, iteratively we add one by one point from $P$ into the solution set to get a final solution set, i.e., if we have a solution set $S_{i}$ of size $i$, then we add one more point into $S_{i}$ to get solution set $S_{i+1}$ of size  $i+1$. Let  $\alpha=cost_{\gamma}(S_{i})$. Now, we add a point from $P\setminus S_{i}$ into $S_{i}$ to get $S_{i+1}$ such that $ cost_{\gamma}(S_{i+1}) \geq \frac{\alpha}{\lambda} $. We stop this iterative method if we have $S_{k}$ or no more point addition %in $S_{i}$ 
is possible.  We repeat the above process  for each distinct $S_{\gamma+1} \subseteq P$ and report the solution for which the $\gamma$-dispersion cost value is maximum.

\begin{algorithm}[h]
	\caption{Dispersion Algorithm$(P, k, \gamma)$}
	\textbf{Input: }A set $P$ of $n$ points, a positive integer $\gamma$ and an integer $k$ such that $\gamma+1 \leq k \leq n$ \\
	\textbf{Output: }A subset $S_{k} \subseteq P$ such that $|S_{k}|=k$ and $\beta=cost_{\gamma}(S_{k})$.	
	\begin{algorithmic}[1]
		\State If $\gamma=2$ (resp. $\gamma=1$), then  $\lambda \gets 2\sqrt3$   (resp. $\lambda \gets 2$), and if points are on a line then  $\lambda \gets 1$  \label{line1}.
		\State $\beta \gets 0$ // Initially, $cost_{\gamma}(S_{k})=0$
		\For {each subset $S_{\gamma+1} \subseteq P $ consisting of $\gamma +1$ points } \label{line2algo}
		
		\State  Set  { $ \alpha  \gets cost_{\gamma}(S_{\gamma+1})$}
		\State  Set   { $\rho \gets \alpha / \lambda$ }
		\If{ $\rho > \beta$}
		\State $flag \gets 1$,  $i \gets \gamma+1$
		
		\While{$ i < k$  and $flag  \neq 0$}
		\State $flag  \gets 0$
		\State choose a point $p \in P\setminus S_{i}$ (if possible) such that $cost_{\gamma} (S_{i} \cup \{p\}) \geq \rho$ and $cost_{\gamma}(p,S_{i})=\min_{q \in P\setminus S_{i}}cost_{\gamma}(q,S_{i})$. \label{alphaline}
		\If {such point $p$ exists in step $10$}
		\State $ S_{i+1} \gets S_{i} \cup \{p\} $ \label{line10algo}
		\State $i \gets i+1$, $flag \gets 1$
		\EndIf
		\EndWhile
		\If {$i=k$} 
		\State $S_{k} \gets S_{i}$ and  $\beta \gets \rho$ \label{line15algo}
		\EndIf 
		\EndIf
		\EndFor
		\State return $(S_{k},\beta)$
	\end{algorithmic}  \label{algo}
\end{algorithm}
\subsection{$2\sqrt3$-Factor Approximation Result for the $2$-Dispersion Problem}
Let $S^{*} \subseteq P=\{p_{1},p_{2},\ldots, p_{n}\} $  be an optimal solution for a given instance $(P,k)$ of the  $2$-dispersion problem  and  $S_{k} \subseteq P$ be a solution returned by  greedy Algorithm \ref{algo} for the given instance, provided $\gamma=1$ as an additional input. A point $s_{o}^{*} \in S^{*}$ is said to be a solution point if $cost_{2}(S^{*})$ is defined by $s_{o}^{*}$, i.e.,  $cost_{2}(S^{*}) = d(s_{o}^{*}, s_{r}^{*}) + d(s_{o}^{*}, s_{t}^{*}) $ such that  (i) $s_{r}^{*}, s_{t}^{*} \in S^{*}$, and (ii) $s_{r}^{*}$ and $s_{t}^{*}$ are the first and  second closest points  of $s_{o}^{*}$ in $S^{*}$, respectively. We call $s_{r}^{*}$, $s_{t}^{*}$ as supporting  points. Let $\alpha=cost_{2}(S^*)$. In this problem, the value of $\lambda$ is $2\sqrt{3}$ (line number \ref{line1} of Algorithm \ref{algo}).

\begin{lemma}
	The triangle formed by three points $s_{o}^{*}$, $s_{r}^{*}$  and $s_{t}^{*}$ does not contain any point in  $S^{*} \setminus \{s_{o}^*,s_{r}^*,s_{t}^*\}$, where $s_{o}^*$ is the solution  point, and $s_{r}^{*}$,  $s_{t}^{*}$ are supporting  points.
\end{lemma}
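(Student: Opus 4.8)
The plan is to argue by contradiction, using the only structural fact available about $s_{o}^{*}$: it realizes the minimum in $cost_{2}(S^{*})=\min_{p\in S^{*}}cost_{2}(p,S^{*})$, so \emph{every} point $p\in S^{*}$ satisfies $cost_{2}(p,S^{*})\ge \alpha$. Suppose, for contradiction, that some $q\in S^{*}\setminus\{s_{o}^{*},s_{r}^{*},s_{t}^{*}\}$ lies in the closed triangle $T$ with vertices $s_{o}^{*},s_{r}^{*},s_{t}^{*}$. Since $s_{r}^{*}$ and $s_{t}^{*}$ are two distinct points of $S^{*}$, both different from $q$, we have $cost_{2}(q,S^{*})\le d(q,s_{r}^{*})+d(q,s_{t}^{*})$. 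Hence it suffices to prove $d(q,s_{r}^{*})+d(q,s_{t}^{*})<\alpha=d(s_{o}^{*},s_{r}^{*})+d(s_{o}^{*},s_{t}^{*})$, since this yields $cost_{2}(q,S^{*})<\alpha$, contradicting $cost_{2}(q,S^{*})\ge\alpha$.

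Assume first that $T$ is non-degenerate. The geometric heart of the argument is the elementary fact that if a point $q$ lies in the interior of a non-degenerate triangle $ABC$, then $d(q,A)+d(q,B)<d(C,A)+d(C,B)$; this follows by extending the segment $Aq$ until it meets side $BC$ at a point $D$ and applying the triangle inequality twice, $d(q,A)+d(q,B)\le d(A,D)+d(D,B)\le d(A,C)+d(C,B)$, with strictness forced by non-degeneracy. Applying this with $A=s_{r}^{*}$, $B=s_{t}^{*}$, $C=s_{o}^{*}$ settles the case where $q$ is in the interior of $T$. When $q$ lies on the boundary of $T$ the same bound is obtained directly: if $q$ is on the open edge $s_{r}^{*}s_{t}^{*}$ then $d(q,s_{r}^{*})+d(q,s_{t}^{*})=d(s_{r}^{*},s_{t}^{*})<\alpha$ by the (strict, since $T$ is non-degenerate) triangle inequality in $\triangle s_{o}^{*}s_{r}^{*}s_{t}^{*}$; and if $q$ is on the open edge $s_{o}^{*}s_{r}^{*}$ (the edge $s_{o}^{*}s_{t}^{*}$ being symmetric) then $d(q,s_{r}^{*})=d(s_{o}^{*},s_{r}^{*})-d(s_{o}^{*},q)$ while $d(q,s_{t}^{*})<d(s_{o}^{*},q)+d(s_{o}^{*},s_{t}^{*})$ (again strict, as the opposite would force all four points collinear), and the two bounds sum to something strictly less than $\alpha$.

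It remains to handle the degenerate possibility that $s_{o}^{*},s_{r}^{*},s_{t}^{*}$ are collinear, so that $T$ is a segment. Whatever the order of the three points on the line, any point of $S^{*}$ that lies strictly inside that segment and is distinct from the three vertices turns out to be strictly closer to $s_{o}^{*}$ than $s_{t}^{*}$ is (because it lies strictly between $s_{o}^{*}$ and $s_{t}^{*}$, or between $s_{o}^{*}$ and $s_{r}^{*}$, on the line), which contradicts $s_{r}^{*}$ and $s_{t}^{*}$ being the first and second closest points of $S^{*}$ to $s_{o}^{*}$; hence no such $q$ exists. I expect the only mildly delicate part of the write-up to be stating the boundary and collinear sub-cases cleanly — the interior case is immediate from the standard cevian inequality above, and everything else is bookkeeping with the triangle inequality.
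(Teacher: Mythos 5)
Your proof is correct, and it reaches the contradiction from a different direction than the paper does. The paper localizes the contradiction at the solution point $s_{o}^{*}$: it observes that a point $s_{m}^{*}$ inside the triangle is at distance from $s_{o}^{*}$ at most $\max\{d(s_{o}^{*},s_{r}^{*}),\,d(s_{o}^{*},s_{t}^{*})\}$ (the distance to a fixed vertex is maximized at a vertex of the triangle), so $s_{m}^{*}$ would displace the farther of the two supporting points and force $cost_{2}(s_{o}^{*},S^{*})<\alpha$. You instead localize the contradiction at the intruding point $q$: via the cevian inequality $d(q,s_{r}^{*})+d(q,s_{t}^{*})<d(s_{o}^{*},s_{r}^{*})+d(s_{o}^{*},s_{t}^{*})$ you get $cost_{2}(q,S^{*})<\alpha$, contradicting that $\alpha$ is the minimum of $cost_{2}(p,S^{*})$ over $p\in S^{*}$. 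Both arguments are sound and of comparable length; yours has the small advantages that it never needs to know which of $s_{r}^{*},s_{t}^{*}$ is the nearer one (the paper splits into two cases on this), and that you explicitly treat the boundary and collinear configurations, which the paper's one-line argument glosses over. The paper's version is slightly more economical in that it reuses only the definition of ``first and second closest point of $s_{o}^{*}$'' rather than invoking the global minimality of $cost_{2}(S^{*})$, but nothing of substance hinges on that.
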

\begin{proof}
	Suppose there exist a point $s_{m}^{*} \in S^{*}$  inside the triangle formed by $s_{o}^{*}$, $s_{r}^{*}$ and $s_{t}^{*}$. Now, if $d(s_{o}^{*}, s_{r}^{*}) \geq d(s_{o}^{*}, s_{t}^{*})$ then
	$d(s_{o}^{*}, s_{t}^{*}) + d(s_{o}^{*}, s_{m}^{*}) < d(s_{o}^{*}, s_{r}^{*}) + d(s_{o}^{*}, s_{t}^{*})$ which contradict the optimality of $cost_{2}(S^{*})$.  A similar argument will also  work for  $  d(s_{o}^{*}, s_{r}^{*}) < d(s_{o}^{*}, s_{t}^{*})  $.
\end{proof}

In this problem, $\rho=\frac{\alpha}{\lambda}=\frac{cost_{2}(S^*)}{2\sqrt3}$. We define a disk  $D_{i}$ centered at  $p_{i} \in P$ as follows: $D_{i} = \{p_j \in \mathbb{R}^2|d(p_{i}, p_{j}) \leq \rho\}$.  Let $D=\{D_{i}\mid p_{i} \in P\}$. Let $D^*$ be the subsets of $D$ corresponding  to disks centered at  points in $S^*$. A point $p_{j}$ is \emph{properly contained} in $D_{i}$, if $d(p_{i},p_{j}) <  \rho $, whereas  if $d(p_{i},p_{j}) \leq  \rho $, then we say that point $p_{j}$ is \emph{contained} in $D_{i}$.

\begin{lemma} \label{lemma2}
	For any point $p \in P $, if $D^p=\{q \in \mathbb{R}^2 \mid d(p,q) \leq \rho\}$ then $D^p$ properly contains at most two points of the optimal set $S^*$.
\end{lemma}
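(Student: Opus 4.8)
The plan is to argue by contradiction, in the same spirit as the proof of Lemma~\ref{lemma01}, but using the strictness built into ``properly contains'' to play the role $\epsilon$ played in Section~\ref{section3}. Suppose, for contradiction, that three distinct points $p_a,p_b,p_c\in S^*$ are all properly contained in $D^p$, i.e.\ $d(p,p_a),d(p,p_b),d(p,p_c)<\rho$. Put $r:=\max\{d(p,p_a),d(p,p_b),d(p,p_c)\}$, so $r<\rho$ and all three points lie in the closed disk of radius $r$ centred at $p$. Write $S:=\{p_a,p_b,p_c\}$ and let $L$ be the perimeter of the (possibly degenerate) triangle spanned by these three points.

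The first step is a purely geometric upper bound on $cost_2(S)$. For a three-point set, $cost_2(q,S)$ is just the sum of the two distances from $q$ to the other two points, which equals $L$ minus the side opposite $q$; hence $cost_2(S)=\min_{q\in S}cost_2(q,S)=L-(\text{longest side})\le \tfrac23 L$, since the longest side is at least a third of the perimeter. On the other hand, a triangle contained in a disk of radius $r$ has perimeter at most $3\sqrt3\,r$, the maximum being attained only by the equilateral triangle inscribed in the boundary circle --- this is the packing fact illustrated in Figure~\ref{figure1}. Combining, $cost_2(S)\le \tfrac23\cdot 3\sqrt3\,r = 2\sqrt3\,r$.

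The second step transfers this to $S^*$. Let $q\in S$ attain $cost_2(S)$. Since $S\subseteq S^*$, the sum of the distances from $q$ to its two nearest neighbours in $S^*\setminus\{q\}$ is no larger than the sum of its distances to the two points of $S\setminus\{q\}$, so $cost_2(q,S^*)\le cost_2(q,S)=cost_2(S)$. Consequently
\[
cost_2(S^*)=\min_{q'\in S^*}cost_2(q',S^*)\le cost_2(q,S^*)\le cost_2(S)\le 2\sqrt3\,r< 2\sqrt3\,\rho=2\sqrt3\cdot\frac{cost_2(S^*)}{2\sqrt3}=cost_2(S^*),
\]
a contradiction. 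Hence $D^p$ properly contains at most two points of $S^*$, proving the lemma.

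The one step that needs real care is the extremal inequality that a triangle inside a disk of radius $r$ has perimeter at most $3\sqrt3\,r$. I would justify it by passing to the triangle's smallest enclosing circle, of radius $r'\le r$: if the triangle is acute or right it is inscribed in that circle, and the maximum-perimeter inscribed triangle is equilateral (concavity of $\sin$ on $[0,\pi]$), giving perimeter $\le 3\sqrt3\,r'\le 3\sqrt3\,r$; if it is obtuse, its longest side has length at most $2r'$ while the other two sides sum to less than $2\sqrt2\,r'$, so the perimeter is below $2(1+\sqrt2)r'<3\sqrt3\,r$. Everything else is routine, and the passage from ``$\le$'' to the strict ``$<$'' in the final chain is automatic once $r<\rho$ is recorded --- which is exactly where proper containment enters.
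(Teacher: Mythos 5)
Your proof is correct, and the overall strategy is the same as the paper's: assume three points of $S^*$ are properly contained in $D^p$, show that this forces $cost_2$ of the triple strictly below $2\sqrt3\,\rho = cost_2(S^*)$, and contradict optimality via the monotonicity $cost_2(S)\ge cost_2(S^*)$ for $S\subseteq S^*$ (a step the paper leaves implicit but you spell out). Where you genuinely diverge is in how the key geometric bound is established. The paper asserts, by appeal to a ``packing argument'' and the analogous claim in Lemma~\ref{lemma01}, that $cost_2(\{p_a,p_b,p_c\})$ is maximized when the three points form an inscribed equilateral triangle, and reads off the bound from that extremal configuration; this is really only a sketch. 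You instead factor the bound into two clean, independently verifiable inequalities --- $cost_2$ of a triple equals the perimeter minus the longest side, hence is at most $\tfrac23$ of the perimeter, and a triangle in a disk of radius $r$ has perimeter at most $3\sqrt3\,r$ (proved via the smallest enclosing circle, with the acute/obtuse case split) --- whose product recovers exactly the same constant $2\sqrt3\,r$. Your decomposition buys a fully rigorous derivation, handles the degenerate (collinear) case explicitly, and makes transparent where the strictness from ``properly contained'' ($r<\rho$) enters; the paper's version is shorter but rests on an unproved extremality claim.
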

\begin{proof}
	On the contrary assume that three points $p_{a},p_{b}, p_{c} \in S^*$ such that $p_{a},p_{b}, p_{c}$ are properly contained in $D^p$. Using the similar arguments discussed in the proof of Lemma \ref{lemma01}, $cost_{2}(\{p_{a},p_{b}, p_{c}\})$ 
	is maximum if $p_{a},p_{b}, p_{c}$ are on equilateral triangle inside $D^p$.   Therefore, $d(p_{a},p_{b})=d(p_{a},p_{c})=d(p_{b},p_{c})$. Now, $cost_{2}(\{p_{a},p_{b}, p_{c}\})= d(p_{a},p_{b})+d(p_{a},p_{c}) < \sqrt 3 \rho+\sqrt 3 \rho=2\sqrt3 \rho= cost_{2}(S^{*})$. Therefore, $p_{a},p_{b}, p_{c} \in S^*$ and $cost_{2}(\{p_{a},p_{b}, p_{c}\}) < cost_{2}(S^{*})$ leads to a contradiction. Thus, the lemma.

\end{proof}	
\begin{lemma}For any  three points $\{p_{a}, p_{b}, p_{c}\} \in S^*$, there does not exist any point  $s \in \mathbb{R}^2$ such that $s$ is properly contained in  $D_a \cap D_b \cap D_c$. \label{lemma3}
\end{lemma}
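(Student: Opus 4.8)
The plan is to reduce the statement directly to Lemma \ref{lemma2} by exploiting the symmetry of the Euclidean distance. Suppose, for contradiction, that some point $s \in \mathbb{R}^2$ is properly contained in $D_a \cap D_b \cap D_c$. By the definition of the disks $D_a, D_b, D_c$ (radius $\rho$, centered at $p_a, p_b, p_c$ respectively) and the definition of ``properly contained'', this means $d(s, p_a) < \rho$, $d(s, p_b) < \rho$, and $d(s, p_c) < \rho$.

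Next I would observe that since $d(p_a, s) = d(s, p_a)$ and likewise for $p_b, p_c$, each of $p_a, p_b, p_c$ lies at distance strictly less than $\rho$ from $s$. Hence, letting $D^s = \{q \in \mathbb{R}^2 \mid d(s,q) \leq \rho\}$ be the disk of radius $\rho$ centered at $s$, all three points $p_a, p_b, p_c$ are properly contained in $D^s$. Since $\{p_a, p_b, p_c\} \subseteq S^*$, this contradicts Lemma \ref{lemma2}, which asserts that $D^s$ properly contains at most two points of the optimal set $S^*$. Therefore no such point $s$ exists, proving the lemma.

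There is essentially no hard step here: the argument is a one-line reduction, and the only point requiring (minor) care is the ``reciprocity'' observation that $s$ being properly contained in each of $D_a, D_b, D_c$ is equivalent to $p_a, p_b, p_c$ all being properly contained in a single radius-$\rho$ disk centered at $s$. Once that is noted, Lemma \ref{lemma2} finishes the proof immediately.
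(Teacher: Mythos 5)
Your proposal is correct and follows essentially the same route as the paper: assume for contradiction that $s$ is properly contained in $D_a \cap D_b \cap D_c$, use symmetry of the Euclidean distance to conclude that the disk $D^s$ of radius $\rho$ centered at $s$ properly contains $p_a, p_b, p_c \in S^*$, and invoke Lemma \ref{lemma2} for the contradiction. No gaps; your explicit remark on the reciprocity step is the only (harmless) elaboration beyond the paper's argument.
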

\begin{proof}
	On the contrary assume that $s$ is properly contained in   $D_{a} \cap D_{b} \cap D_{c} $. This implies  $d(p_{a}, s) < \rho$,
	$d(p_{b}, s) <  \rho$ and $d(p_{c}, s) < \rho $. Therefore, the disk $D^s =\{q \in \mathbb{R}^2 \mid d(s,q) \leq \rho \}$ properly contains three points $p_{a}, p_{b}$ and $ p_{c}$, which is a contradiction to Lemma \ref{lemma2}. Thus, the lemma.  
	
\end{proof}

\begin{corollary} \label{cor1}
	For any point $p \in P $, if $D' \subseteq D^*$ is the subset of disks that contains  $p$, then $|D'| \leq 3$ and $p$ lies on the boundary of each disk in $D'$.  
\end{corollary}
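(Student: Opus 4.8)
The plan is to restate the corollary as a statement about a single disk centred at $p$. For $p\in P$ let $D^{p}=\{q\in\mathbb{R}^{2}\mid d(p,q)\le\rho\}$, the disk of radius $\rho$ around $p$ already used in Lemma~\ref{lemma2}. Since the Euclidean distance is symmetric, for each $p_{i}\in S^{*}$ the disk $D_{i}$ contains $p$ iff $p_{i}\in D^{p}$, and $D_{i}$ properly contains $p$ iff $p_{i}$ lies in the interior of $D^{p}$. Hence $D'$ is in bijection with $S^{*}\cap D^{p}$, and it suffices to prove that $|S^{*}\cap D^{p}|\le 3$ and that when this bound is attained all three points of $S^{*}$ lie on the bounding circle of $D^{p}$ (equivalently, each such $p_{i}$ is at distance exactly $\rho$ from $p$, so $p$ lies on $\partial D_{i}$).

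For the bound $|D'|\le 3$ I would combine Lemma~\ref{lemma2} with the equality analysis inside its proof. Suppose $p_{a},p_{b},p_{c}\in S^{*}$ all lie in the closed disk $D^{p}$. On one hand the packing estimate of Lemma~\ref{lemma2} gives $cost_{2}(\{p_{a},p_{b},p_{c}\})\le 2\sqrt3\,\rho=cost_{2}(S^{*})$; on the other hand $cost_{2}(S^{*})\le cost_{2}(\{p_{a},p_{b},p_{c}\})$, because every point of $S^{*}$ has its two nearest neighbours in $S^{*}$ no farther than any two prescribed points of $S^{*}$. So all inequalities are tight, and equality in the packing step forces $p_{a},p_{b},p_{c}$ to be the vertices of an equilateral triangle inscribed in the boundary circle of $D^{p}$. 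The decisive geometric observation is that two distinct points of a circle of radius $\rho$ admit \emph{at most one} further point of that circle completing an equilateral triangle (the two poles of the chord are at distinct distances from its endpoints, since a side of such a triangle, of length $\sqrt3\,\rho$, is not a diameter). Hence $D^{p}$ cannot contain four distinct points of $S^{*}$, giving $|D'|\le 3$; the ``at most two strictly inside'' half can alternatively be read off from Lemma~\ref{lemma3} with $s=p$.

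For the boundary claim, if $|D'|=3$ the three centres lie in $D^{p}$, hence by the previous paragraph they form an equilateral triangle inscribed in $\partial D^{p}$, so each is at distance exactly $\rho$ from $p$ and $p$ lies on the boundary of each of $D_{a},D_{b},D_{c}$. That $p$ should also be a boundary point of every disk of $D'$ when $|D'|\le 2$ I would obtain from the same extremal characterisation: if $p$ lay strictly inside some $D_{x}$ and inside or on two further disks $D_{y},D_{z}$ with $p_{x},p_{y},p_{z}\in S^{*}$, then $p_{x},p_{y},p_{z}$ would be three points of $S^{*}$ in $\overline{D^{p}}$ with $p_{x}$ in the interior of $D^{p}$, contradicting that any such triple lies on $\partial D^{p}$; so a point strictly inside one disk of $D^{*}$ lies in at most two disks of $D^{*}$ in total, leaving only the case $|D'|\le 2$ to settle. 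Reconciling this last case with the literal phrasing ``$p$ lies on the boundary of each disk in $D'$'' is the point where I expect the argument to need the most care (it appears to presuppose $p\notin S^{*}$); the genuinely substantive step, however, is the equality analysis that pins the three optimal points down as an inscribed equilateral triangle and thereby rules out a fourth.
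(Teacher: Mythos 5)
Your argument is sound and in fact considerably more complete than what the paper offers: the paper's entire proof of this corollary is the single sentence ``Follows from Lemma~\ref{lemma3}.'' Lemma~\ref{lemma3} only forbids a point from being \emph{properly} contained in three disks of $D^*$ simultaneously; by itself it neither bounds the number of disks that contain $p$ non-strictly nor forces $p$ onto the boundary of each of them. Your route --- passing to the closed disk $D^p$ by symmetry of the distance, combining the packing bound from the proof of Lemma~\ref{lemma2} with the reverse inequality $cost_2(S^*)\le cost_2(\{p_a,p_b,p_c\})$ (monotonicity of $cost_2$ under passing to a subset), extracting the rigidity statement that any three optimal points in $D^p$ must form an equilateral triangle inscribed in the boundary circle, and then observing that a chord of length $\sqrt3\,\rho$ is not a diameter and hence admits a unique third vertex on the circle completing such a triangle --- supplies exactly the equality analysis that the paper's citation leaves implicit, and it is what actually delivers both $|D'|\le 3$ and the boundary claim when $|D'|=3$. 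Your reservation about the case $|D'|\le 2$ is also well founded: as literally stated the boundary claim fails there (a point of $P$ at distance strictly less than $\rho$ from exactly one optimal point --- for instance an optimal point itself --- is properly contained in exactly one disk of $D^*$), so the corollary must be read as asserting the boundary property only when three disks contain $p$; that weaker reading, together with Corollary~\ref{cor2}, is all that is invoked later in the proof of Theorem~\ref{thm1}. In short, your proof is correct where the statement is correct, identifies precisely where the statement overreaches, and is a genuine strengthening of the paper's one-line justification rather than a restatement of it.
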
 
\begin{proof}
	Follows from   Lemma \ref{lemma3}.  
\end{proof}
\begin{corollary} \label{cor2}
	For any point $p \in P $, if $D'' \subseteq D^*$ is the  subset of disks that properly contains  point $p$, then $|D''| \leq  2$.
\end{corollary}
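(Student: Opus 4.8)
The plan is to derive this directly from Lemma \ref{lemma2}, exploiting the fact that all disks in $D$ share the common radius $\rho$ and that the Euclidean distance is symmetric. The key observation is that the relation ``$p$ is properly contained in $D_a$'' is symmetric in $p$ and $p_a$: it says exactly $d(p_a,p)<\rho$, which is the same as saying $p_a$ is properly contained in the disk $D^p=\{q\in\mathbb{R}^2\mid d(p,q)\le\rho\}$ of radius $\rho$ centered at $p$. So counting disks of $D^*$ that properly contain $p$ is the same as counting points of $S^*$ properly contained in $D^p$.

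First I would assume, for contradiction, that $|D''|\ge 3$, so there are three disks $D_a,D_b,D_c\in D^*$ (centered at $p_a,p_b,p_c\in S^*$, all distinct) each of which properly contains $p$. Unfolding the definition gives $d(p_a,p)<\rho$, $d(p_b,p)<\rho$, and $d(p_c,p)<\rho$. Next I would pass to the disk $D^p$ centered at $p$: the three inequalities above say precisely that $p_a,p_b,p_c$ are all properly contained in $D^p$. Since $p_a,p_b,p_c\in S^*$, this contradicts Lemma \ref{lemma2}, which states that $D^p$ properly contains at most two points of the optimal set $S^*$. Hence $|D''|\le 2$, as claimed.

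There is essentially no technical obstacle here; the only point that needs care is checking that ``properly contained'' is genuinely the symmetric relation described above, which relies on every disk in $D$ (in particular every disk in $D^*$ and the auxiliary disk $D^p$) having the same radius $\rho$ — a fact that is built into the definition of $D$ used throughout this subsection. The argument is the same device already used to deduce Lemma \ref{lemma3} from Lemma \ref{lemma2}, now applied to three centers in $S^*$ and one arbitrary point $p\in P$ rather than to three centers and an arbitrary point of the plane, so I would simply cite Lemma \ref{lemma2} and state that the corollary follows.
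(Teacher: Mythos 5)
Your argument is correct and is essentially the paper's own route: the paper derives the corollary from Lemma~\ref{lemma3} (and Corollary~\ref{cor1}), and Lemma~\ref{lemma3} is proved by exactly the symmetry device you use, so you have merely inlined that step and cited Lemma~\ref{lemma2} directly. No gap.
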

\begin{proof}
	Follows from Lemma \ref{lemma3} and Corollary \ref{cor1}. 
\end{proof}

\begin{lemma} \label{lemma010}
	Let $S\subseteq P$ be a set of points such that $|S| < k$. If $cost_{2}(S)\geq \rho$, then there exists at least one disk $D_{j} \in D^*=\{D_{1},D_{2}, \ldots, D_{k}\}$ that properly contains at most one point from the set $S$.
\end{lemma}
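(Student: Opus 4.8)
The plan is to mirror the double-counting argument used in the proof of Lemma~\ref{lemma02}, but with the \emph{proper} containment relation in place of containment and with Corollary~\ref{cor2} playing the role that Lemma~\ref{lemma01} played there. Concretely, I would argue by contradiction: suppose that \emph{every} disk $D_j \in D^*=\{D_1, D_2,\ldots,D_k\}$ properly contains at least two points of $S$. Observe first that $|D^*|=k$, since $D^*$ consists of the disks centred at the $k$ distinct points of $S^*$.

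Next I would build a bipartite graph $H(S \cup D^*, {\cal E})$ whose two vertex classes are $S$ and $D^*$, with an edge between $s \in S$ and $D_j \in D^*$ exactly when $s$ is properly contained in $D_j$. On the one hand, by the contradiction hypothesis every vertex of $D^*$ has degree at least $2$ in $H$, so the total degree on the $D^*$ side is at least $2|D^*| = 2k$. On the other hand, each $s \in S \subseteq P$ is properly contained in at most two disks of $D^*$ by Corollary~\ref{cor2}, so the total degree on the $S$ side is at most $2|S|$. Equating the two counts of $|{\cal E}|$ gives $2k \leq 2|S|$, i.e. $k \leq |S|$, contradicting the hypothesis $|S| < k$. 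Hence at least one disk $D_j \in D^*$ properly contains at most one point of $S$, which is the assertion.

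The only point that needs a line of care is that Corollary~\ref{cor2} is phrased for an arbitrary $p \in P$; since $S \subseteq P$ it applies verbatim to each $s \in S$, including the case where $s$ happens to coincide with a point of $S^*$ (then $s$ lies in its own disk of $D^*$, but still in at most two disks of $D^*$ by the corollary). I also note that the hypothesis $cost_2(S) \geq \rho$ is not actually consumed by this counting step — it is the invariant on $S$ that will matter when the point supplied by this lemma is adjoined to $S$ in the main algorithm — so the proof of the lemma itself uses only $|S| < k$, Corollary~\ref{cor2}, and $|D^*| = k$.

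I do not expect a genuine obstacle here: the structure is identical to Lemma~\ref{lemma02}, and the geometric content has already been isolated into Lemmas~\ref{lemma2}--\ref{lemma3} and Corollaries~\ref{cor1}--\ref{cor2}. The one thing I would double-check while writing it out is that ``properly contains'' ($d(\cdot,\cdot)<\rho$) rather than ``contains'' is used consistently on both sides of the count, so that Corollary~\ref{cor2} (a statement about proper containment) can be invoked for the upper bound.
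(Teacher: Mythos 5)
Your proposal is correct and matches the paper's own proof essentially verbatim: the same contradiction hypothesis, the same bipartite graph between $S$ and $D^*$ with edges given by proper containment, the lower bound $2k$ on the $D^*$ side, and the upper bound $2|S|<2k$ on the $S$ side via Corollary~\ref{cor2}. Your side remark that the hypothesis $cost_2(S)\geq\rho$ is not consumed by the counting is also accurate — the paper's proof does not use it either.
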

\begin{proof}
	On the contrary assume that each   $D_{j} \in D^*$ properly contains at least two points from the set $S$. Construct a bipartite graph $G(S \cup  D^*, {\cal E})$ as follows: (i) $S$ and $D^*$ are two partite vertex sets, and (ii) for $u \in S$, $(u,  D_j) \in {\cal E}$ if and only if $u$  is properly contained in $ D_j$. According to assumption, each disk $ D_j$ contains at least $2$ points from the set $S$. Therefore, the total degree of the vertices in $ D^*$ in $G$ is at least $2k$. Note that $| D^*| = k$. On the other hand, the total degree of the vertices in $S$ in   $G$ is at most $2 \times |S|$ (see Corollary  \ref{cor2}). Since $|S| < k$, the total degree of the vertices in $S$ in   $G$ is less than $2k$, which leads to a contradiction that the total degree of the vertices in $ D^*$ in   $G$ is at least $2k$. Thus, there exist at least one disk $D_{j} \in D^*$ such that the disk $D_{j}$ properly contains at most one point from the set $S$.

\end{proof}

\begin{theorem}
	Algorithm \ref{algo} produces a $2\sqrt{3}$-factor approximation result for the $2$-dispersion problem in $\mathbb{R}^2$. \label{thm1}
\end{theorem}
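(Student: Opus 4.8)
The plan is to exploit the outer \textbf{for} loop of Algorithm \ref{algo}: it runs over every $3$-element seed $S_{\gamma+1}\subseteq P$ and finally returns the run achieving the largest dispersion value, so it suffices to exhibit \emph{one} seed for which the \textbf{while} loop manages to grow it into a set $S_{k}$ of size $k$ with $cost_{2}(S_{k})\ge \frac{cost_{2}(S^{*})}{2\sqrt{3}}$. I would take the seed $S_{3}=\{s_{o}^{*},s_{r}^{*},s_{t}^{*}\}$ formed by the solution point and its two supporting points. The key preliminary observation is the monotonicity fact that $cost_{2}(S')\ge cost_{2}(S^{*})$ for every $S'\subseteq S^{*}$ (restricting to a subset only moves each point's nearest and second--nearest neighbour farther away); since also $cost_{2}(S_{3})\le cost_{2}(s_{o}^{*},S_{3})=d(s_{o}^{*},s_{r}^{*})+d(s_{o}^{*},s_{t}^{*})=cost_{2}(S^{*})$, we get $cost_{2}(S_{3})=cost_{2}(S^{*})$, so the threshold the algorithm computes for this seed is precisely $\rho=\alpha/\lambda=\frac{cost_{2}(S^{*})}{2\sqrt{3}}$.

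The core of the argument is an \emph{extension step}: for every $S\subseteq P$ with $|S|<k$ and $cost_{2}(S)\ge\rho$ there is a point $p\in P\setminus S$ with $cost_{2}(S\cup\{p\})\ge\rho$. Granting this, an induction on $i$ from $\gamma+1$ up to $k$ shows that the run seeded by $S_{3}$ maintains the invariant $cost_{2}(S_{i})\ge\rho$: line \ref{alphaline} only rejects a point whose insertion would drop the cost below $\rho$, hence whichever admissible point the greedy rule actually selects preserves the invariant, and by the extension step an admissible point exists as long as $|S_{i}|<k$. So the run reaches $S_{k}$ with $cost_{2}(S_{k})\ge\rho$, the output $(S_{k},\beta)$ satisfies $cost_{2}(S_{k})\ge\beta\ge\rho=\frac{cost_{2}(S^{*})}{2\sqrt{3}}$, and therefore $\frac{cost_{2}(S^{*})}{cost_{2}(S_{k})}\le 2\sqrt{3}$.

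To prove the extension step I would invoke Lemma \ref{lemma010}: since $|S|<k=|D^{*}|$ and, by Corollary \ref{cor2}, each point of $S$ is properly contained in at most two disks of $D^{*}$, some disk $D_{j}\in D^{*}$, centred at $p_{j}\in S^{*}$, properly contains at most one point of $S$; the new point is $p=p_{j}$. Checking $cost_{2}(S\cup\{p_{j}\})\ge\rho$ is a short case analysis in the spirit of Section \ref{section3}, now carried out with the strict ``properly contained'' bookkeeping that is exactly what lets the bare factor $2\sqrt{3}$ replace $2\sqrt{3}+\epsilon$: the second nearest neighbour of $p_{j}$ in $S$ lies at distance $\ge\rho$; for $q\in S$ with $d(q,p_{j})\ge\rho$, if $p_{j}$ becomes one of the two nearest neighbours of $q$ that summand is $\ge\rho$, otherwise $cost_{2}(q,\cdot)$ is unchanged; and for the at most one point $q\in S$ with $d(q,p_{j})<\rho$ the triangle inequality $d(q,p_{j})+d(q,s)\ge d(p_{j},s)\ge\rho$ (with $s$ the nearest neighbour of $q$ in $S$, which differs from $q$ and so is not properly contained in $D_{j}$) covers the remaining case. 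The step I expect to be the main obstacle is the hidden requirement $p_{j}\notin S$, since Lemma \ref{lemma010} only yields a light disk, not one with a genuinely new centre. If $S\subseteq S^{*}$ this is harmless --- one simply adds any point of $S^{*}\setminus S$, which even keeps the cost $\ge cost_{2}(S^{*})$ --- but in general $S$ can contain points outside $S^{*}$, and a pure counting argument only gives that there are at least two light disks; to rule out that all their centres already lie in $S$ I would bring in the geometric non--overlap results (Lemmas \ref{lemma2} and \ref{lemma3}). If every light disk were centred in $S$, then each $q^{*}\in S^{*}\setminus S$ would have $D_{q^{*}}$ properly containing at least two points of $S$, hence by Lemma \ref{lemma2} at least one point of $S\setminus S^{*}$; a pigeonhole over $S\setminus S^{*}$ then forces some point of $S\setminus S^{*}$ together with two distinct optimal centres to lie mutually within distance $<\rho$, and combining this with $cost_{2}(S^{*})=2\sqrt{3}\,\rho$ produces a contradiction. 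Making this last contradiction quantitatively tight is the delicate part of the proof.
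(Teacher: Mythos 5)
Your main argument coincides with the paper's proof of Theorem \ref{thm1}: run the outer loop, isolate the seed $S_{3}=\{s_{o}^{*},s_{r}^{*},s_{t}^{*}\}$, note $cost_{2}(S_{3})=cost_{2}(S^{*})$ so that the threshold for this seed is $\rho=cost_{2}(S^{*})/(2\sqrt{3})$, and maintain the invariant $cost_{2}(S_{i})\ge\rho$ by invoking Lemma \ref{lemma010} to obtain a disk $D_{j}\in D^{*}$ properly containing at most one point of $S_{i}$ and adding its centre $p_{j}$. Your case analysis for $cost_{2}(S_{i}\cup\{p_{j}\})\ge\rho$ (second-nearest neighbour of $p_{j}$ in $S_{i}$ at distance $\ge\rho$; triangle inequality through the single close point; old points either unaffected or gaining a summand $\ge\rho$) is exactly the paper's.

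Where you depart from the paper is the worry that the light disk produced by Lemma \ref{lemma010} may be centred at a point already in $S_{i}$, so that ``adding $p_{j}$'' yields nothing new. The observation is legitimate --- the paper silently assumes otherwise, treating the one properly contained point $p_{x}$ as distinct from the centre $p_{j}$ --- but the repair you sketch does not close. Your pigeonhole is fine: if every disk centred in $S^{*}\setminus S$ properly contained two points of $S$, then (using Lemma \ref{lemma2} to see that at most one of those two can come from $S\cap S^{*}$) the $k-m$ such disks each consume a point of $S\setminus S^{*}$, and since $|S\setminus S^{*}|=i-m<k-m$ some $q\in S\setminus S^{*}$ lies within distance $\rho$ of two optimal centres $p_{a},p_{b}$, whence $d(p_{a},p_{b})<2\rho$. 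But this is \emph{not} in tension with $cost_{2}(S^{*})=2\sqrt{3}\,\rho$: the $2$-dispersion cost of a point is the \emph{sum} of the distances to its two nearest neighbours, so one summand may be far smaller than $\sqrt{3}\rho$ provided the other compensates, and two optimal points at distance $<2\rho$ is perfectly consistent with optimality. (Lemma \ref{lemma3} is also no help here, since only two disks are involved.) A sharpened degree count restricted to disks centred in $S^{*}\setminus S_{i}$, in which each point of $S_{i}\cap S^{*}$ has residual capacity $1$ because its own disk occupies one of its two slots from Corollary \ref{cor2}, likewise only yields $|S_{i}\cap S^{*}|\ge 2(k-i)$, which is no contradiction when $i$ is close to $k$. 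So this added step --- the one place you go beyond the paper --- is the one place your write-up does not actually prove what it needs; to finish it you would have to argue directly that line \ref{alphaline} can always find \emph{some} admissible point of $P\setminus S_{i}$, rather than insisting that the admissible point be the centre of the light disk.
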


\begin{proof}

	Since it is a $2$-dispersion problem, so $\gamma=2$ and set $\lambda=2\sqrt3$ in line number \ref{line1} of  Algorithm \ref{algo}. Now, assume $\alpha=cost_{2}(S^*)$ and $\rho=\frac{\alpha}{\gamma}=\frac{cost_{2}(S^*)}{2\sqrt3}$, where $S^*$ is an optimum solution.
	Here, we show that   Algorithm \ref{algo} returns a solution  set $S_{k}$ of size $k$ such that $cost_{2}(S_{k}) \geq \rho=\frac{cost_{2}(S^*)}{2\sqrt3}$. More precisely, we show that Algorithm \ref{algo} returns a solution $S_{k}$ of size $k$ such that  $cost_{2}(S_{k})\geq \frac{cost_{2}(S^*)}{2\sqrt3}$ and $S_{k} \supseteq \{{s_{o}^{*},s_{r}^{*},s_{t}^{*}}\}$, where $s_{o}^*$ is the solution point and $s_{r}^*$ and $s_{t}^*$ are supporting points, i.e., $cost_{2}(S^*)=d(s_{o}^*,s_{r}^*)+d(s_{o}^*, s_{t}^*)$. Now, consider the case when $S_{3}=\{ s_{o}^{*},s_{r}^{*}, s_{t}^{*}\}$ in  line number \ref{line2algo} of  Algorithm \ref{algo}. Our objective is to show that if $S_{3}=\{ s_{o}^{*},s_{r}^{*}, s_{t}^{*}\}$  in  line number \ref{line2algo}  of  Algorithm \ref{algo}, then it computes a solution $S_{k}$ of size $k$ such that $cost_{2}(S_{k}) \geq \frac{cost_{2}(S^*)}{2\sqrt3}$. Note that  any other solution returned by  Algorithm \ref{algo} has a $2$-dispersion cost better than $\frac{cost_{2}(S^*)}{2\sqrt3}$. Therefore, it is sufficient to prove that if  $S_{3}=\{ s_{o}^{*},s_{r}^{*}, s_{t}^{*}\}$ in line number \ref{line2algo} of  Algorithm \ref{algo}, then the size of  $S_{k}$ (updated) in line number \ref{line15algo}  of  Algorithm \ref{algo} is $k$ as every time Algorithm \ref{algo} added a point (see line number \ref{line10algo}) into the set  with the property that $2$-dispersion  cost of the updated set is greater than or equal to  $ \frac{cost_{2}(S^*)}{2\sqrt3}$. Therefore,  we consider $S_{3}=\{ s_{o}^{*},s_{r}^{*}, s_{t}^{*}\}$ in  line number \ref{line2algo}  of  Algorithm \ref{algo}.

	We use induction to  establish the condition    $cost_{2}(S_{i}) \geq \rho$  for each $i=3,4,\ldots k$. Since $S_{3}=S_{3}^*$, therefore $cost_{2}(S_{3})=cost_{2}(S_{3}^*)=\alpha > \rho $ holds. Now, assume that the condition $cost_{2}(S_{i})\geq \rho$ holds for each  $i$ such that $3 \leq i < k$. We will prove that the condition $cost_{2}(S_{i+1})\geq \rho$ holds for $(i+1)$ too.

	Let $D^*$ be the set of disks centered at the points in $S^*$ such that the radius of each disk is $\rho$. Since $i<k$ and $S_{i}\subseteq P$ with condition $cost_{2}(S_{i})\geq \rho$, there exist at least one disk, say $D_{j} \in D^*$ that properly contains at most one point in $S_{i}$ (see Lemma \ref{lemma010}). We will show that $cost_{2}(S_{i+1})=cost_{2}(S_{i} \cup \{p_{j}\}) \geq \rho$, where $p_{j}$ is the center of the disk $D_{j}$.
	Suppose, $ D_j$ contains only one point $ p_{x} \in S_i$, then $p_{x}$ is the first closest point of $p_{j}$ in the set $S_{i}$. Now, by Corollary \ref{cor1} and by Lemma \ref{lemma010}, we claim the second closest point $p_{\ell}$ of $p_{j}$ in  the set $S_{i}$ may lie (1)  on the boundary of the disk $D_{j}$ (see Figure \ref{figure3}(a)) or (2)  outside of the disk $D_{j}$(see Figure \ref{figure3}(b)). 
	
	\begin{figure}[h!]
		
		\centering
		\includegraphics[scale=.9]{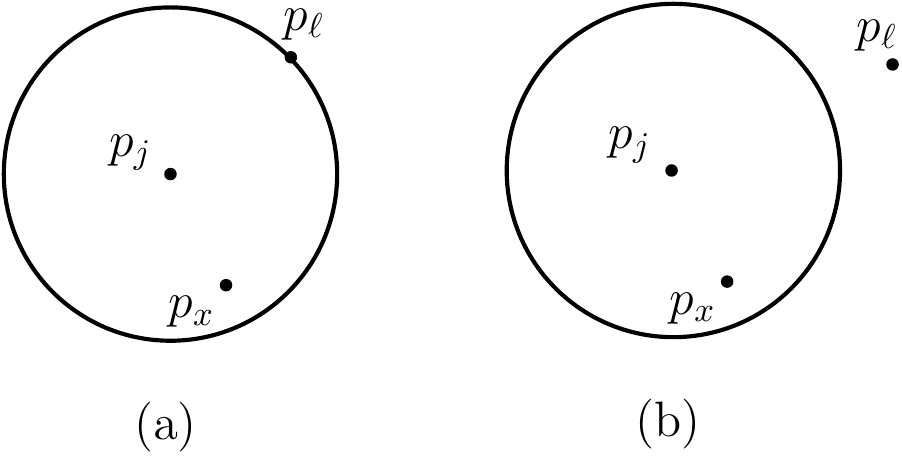}
		\caption{(a) $p_{\ell}$ lies on the boundary of the disk $D_{j}$ and  (b) $p_{\ell}$ lies outside of the disk $D_{j}$}\label{figure3}   			
		
	\end{figure}
	
	Since $d(p_j, p_{\ell})\geq \rho$ for both the above mentioned cases, therefore $cost_{2}(p_{j},S_{i}) \geq \rho$.
	Also, from  triangular inequality $d(p_{x}, p_{j}) +d(p_{x}, p_{\ell}) \geq d(p_{j},p_\ell) \geq \rho $ for each point $p_{\ell} \in S_{i}$. So, we can add the point $p_{j}$ to the set $S_{i}$ to construct set $S_{i+1}$. Here, $S_{i+1}=S_{i} \cup \{p_{j}\}$. Therefore,  $cost_{2}(S_{i+1}) \geq \rho$. 
	
	Now, if $D_j$ does not properly  contain any point from the set $S_{i}$, then  the distance of $p_{j}$ to any point of the set $S_{i}$ is greater than or equal to $\rho$. Since there exists at least one point $p_{j}  \in P \setminus S_i$ such that $cost_{2}(S_{i+1})=cost_{2}(S_{i} \cup \{p_{j}\}) \geq \rho$, therefore Algorithm \ref{algo} will always choose a point (see line number \ref{alphaline} of Algorithm \ref{algo}) in the iteration $i+1$ such that   $cost_{2}(S_{i+1})  \geq \rho $. 
	
	So, we can conclude that   $cost_{2}(S_{i+1})  \geq \rho $ and thus condition holds for $(i+1)$ too.
	
	 Therefore, Algorithm \ref{algo} produces a set $S_{k}$ of size $k$ such that $cost_{2}(S_{k}) \geq \rho$. Since $\rho \geq \frac{cost_{2}(S^*)}{2\sqrt 3}$, Algorithm \ref{algo} produces $2\sqrt3$-factor approximation result for the $2$-dispersion problem.
	
\end{proof}

\subsection{$2$-Dispersion Problem on a Line}

In this section, we discuss the $2$-dispersion problem on a line $L$. Let the point set $P=\{p_{1}, p_{2}, \ldots p_{n}\}$  be on a horizontal line arranged from left to right. Let $S_k \subseteq P$  be a solution returned by  Algorithm \ref{algo} and $S^*\subseteq P$  be an optimal solution. Note that, the value of $\gamma $ is $2$ and the value of $\lambda$ (line number \ref{line1} of Algorithm \ref{algo}) is $1$ in this problem. Let $s_{o}^*$ be a solution point and $s_{r}^*, s_{t}^*$ be supporting points, i.e., $cost_{2}(S^*)=d(s_{o}^*,s_{r}^*)+d(s_{o}^*,s_{t}^*)$. Let $S_{3}^*=\{s_{o}^*,s_{r}^*,s_{t}^*\}$.  We show that if $S_{3}=S_{3}^*$ in line number \ref{line2algo} of  Algorithm \ref{algo}, then   $cost_{2}(S_3)= cost_{2}(S^*)$. Let $S^*=\{s_{1}^*, s_{2}^*, \dots s_{k}^*\}$ are arranged from left to right.

\begin{lemma} \label{lemmaline}
	Let $S^*$ be an optimal solution. If $s_{o}^*$ is the  solution point and $s_{r}^*, s_{t}^*$ are supporting points, then both points $s_{r}^*$ and $s_{t}^*$ cannot be on the same side on the line $L$ with respect to $s_{o}^*$ and three points $s_{r}^*,s_{o}^*,s_{t}^*$ are consecutive  on the line $L$ in $S^*$.
\end{lemma}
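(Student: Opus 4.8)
The plan is to prove both assertions by contradiction, using only the defining properties of the solution point --- that $cost_2(S^*)=cost_2(s_o^*,S^*)=\min_{p\in S^*}cost_2(p,S^*)$ and that $s_r^*,s_t^*$ are the first and second closest points of $s_o^*$ in $S^*$ --- together with the one-dimensional identity $d(x,z)=d(x,y)+d(y,z)$ whenever $y$ lies between $x$ and $z$ on $L$. For the first claim, suppose $s_r^*$ and $s_t^*$ lie on the same side of $s_o^*$. Since $s_r^*$ is the nearer of the two supporting points, it lies strictly between $s_o^*$ and $s_t^*$, so $d(s_r^*,s_o^*)+d(s_r^*,s_t^*)=d(s_o^*,s_t^*)$. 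Because $s_o^*$ and $s_t^*$ are two distinct points of $S^*\setminus\{s_r^*\}$, the sum of the two smallest distances from $s_r^*$ is at most $d(s_r^*,s_o^*)+d(s_r^*,s_t^*)$, whence $cost_2(s_r^*,S^*)\le d(s_o^*,s_t^*)<d(s_o^*,s_r^*)+d(s_o^*,s_t^*)=cost_2(S^*)$, contradicting $cost_2(S^*)=\min_{p\in S^*}cost_2(p,S^*)$. Hence the two supporting points lie on opposite sides of $s_o^*$.

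For the second claim, let $u$ and $v$ be the supporting points to the left and to the right of $s_o^*$, respectively, so that $\{u,v\}=\{s_r^*,s_t^*\}$. Suppose some $q\in S^*$ lies strictly between $u$ and $s_o^*$, or strictly between $s_o^*$ and $v$. Then $q\notin\{s_r^*,s_o^*,s_t^*\}$, since $q$ is strictly between $s_o^*$ and a supporting point on one side, so it is neither $s_o^*$, nor that supporting point, nor the supporting point on the other side. In either case $q$ is strictly closer to $s_o^*$ than the farther of the two supporting points, i.e.\ $d(s_o^*,q)<\max\{d(s_o^*,s_r^*),d(s_o^*,s_t^*)\}=d(s_o^*,s_t^*)$. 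But since $s_r^*,s_t^*$ are the two closest points of $s_o^*$ in $S^*$, every point of $S^*\setminus\{s_o^*,s_r^*\}$ is at distance at least $d(s_o^*,s_t^*)$ from $s_o^*$; applying this to $q$ gives a contradiction. Therefore no such $q$ exists, and $s_r^*,s_o^*,s_t^*$ are consecutive along $L$ within $S^*$.

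Both arguments are short, so the only point that needs real care is ties among the distances from $s_o^*$ to the other points of $S^*$. I would handle this by working with the sorted multiset of these distances: by definition its two smallest entries are $d(s_o^*,s_r^*)\le d(s_o^*,s_t^*)$, and they sum to $cost_2(s_o^*,S^*)$, so any point of $S^*\setminus\{s_o^*,s_r^*\}$ strictly closer to $s_o^*$ than $d(s_o^*,s_t^*)$ would make the second-smallest distance strictly smaller, contradicting the optimality of the pair $s_r^*,s_t^*$ --- this is precisely the step invoked in both paragraphs. One may also note in passing that the two claims together force $s_o^*$ to lie strictly inside the span of $S^*$ on $L$, which is consistent since $|S^*|=k\ge 3$.
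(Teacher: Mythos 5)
Your proposal is correct and follows essentially the same route as the paper: for the first assertion you derive a contradiction by showing the middle of the three collinear points has strictly smaller $cost_2$ than $s_o^*$, and for the second you show any intervening point of $S^*$ would be closer to $s_o^*$ than a supporting point, contradicting the choice of the two closest points. Your version is in fact slightly more careful than the paper's (you correctly identify the nearer supporting point as the one in the middle and you handle ties explicitly), but the underlying argument is the same.
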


\begin{proof}
	
	\begin{figure}[h!]
		
		\centering
		\includegraphics[scale=1]{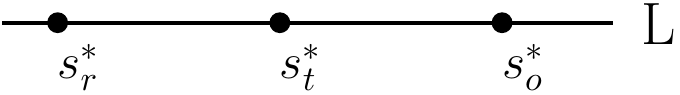}
		\caption{$s_{r}^*$ and $s_{t}^*$ on left side of $s_{o}^*$ }\label{figure4}   			
	\end{figure}
	
	On the contrary assume that   both $s_{r}^*$ and $s_{t}^*$ are on the left side of $s_{o}^*$,   and  $s_{t}^*$ lies between $s_{r}^*$ and $s_{o}^*$ (see Figure \ref{figure4}). Now, $d(s_{t}^*,s_{o}^*)+d(s_{t}^*,s_{r}^*) < d(s_{o}^*, s_{t}^*)+d(s_{o}^*, s_{r}^*) $ which leads to a  contradiction that $s_{o}^*$ is a solution point, i.e., $cost_{2}(S^*)=d(s_{o}^*, s_{r}^*)+d(s_{o}^*, s_{t}^*)$.  
	Now, suppose  $s_{r}^*,s_{o}^*,s_{t}^*$ are not consecutive  in $S^*$. Let $s^*$ be the point in $S^*$ such that either $s^* \in (s_{r}^*,s_{o}^*) $ or $s^* \in (s_{o}^*,s_{t}^*) $. If $s^* \in (s_{r}^*,s_{o}^*) $, then $d(s_{o}^*,s_{r}^*)+d(s_{o}^*+s_{t}^*) > d(s_{o}^*,s^*)+d(s_{o}^*+s_{t}^*)$, which leads to a contradiction that $s_{r}^*$ is a supporting point. Similarly, we can show that if $s^* \in (s_{o}^*,s_{t}^*) $, then $s_{t}^*$ is not a supporting point. Thus, $s_{r}^*,s_{o}^*,s_{t}^*$ are consecutive points on the line $L$ in $S^*$.
	
\end{proof}

Lemma \ref{lemmaline} says that if $s_{o}^*$ is a solution point, then $s_{o-1}^*$ and $s_{o+1}^*$ are supporting points as $s_{1}^*, s_{2}^*, \ldots , s_{k}^*$ are arranged from left to right.

\begin{lemma} \label{implemma_line}
	Let $S_{3}=\{s_{o}^*,s_{r}^*,s_{t}^*\}$	and $\alpha=cost_{2}(S_{3})$. Now, if  $S_{i}=S_{i-1} \cup \{p_{i}\}$ constructed in line number \ref{line10algo} of Algorithm \ref{algo}, then  $cost_{2}(S_{i}) = \alpha $. 
\end{lemma}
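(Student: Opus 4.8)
The plan is to sandwich $cost_2(S_i)$ between $\alpha$ from below and $\alpha$ from above, the two bounds coming from two different places. For the lower bound I would invoke the construction rule itself: since the points lie on a line, line~\ref{line1} of Algorithm~\ref{algo} sets $\lambda=1$, so the threshold used throughout the while-loop is $\rho=\alpha/\lambda=\alpha$; and a point $p_i$ is added in line~\ref{line10algo} only when the condition of line~\ref{alphaline} holds, i.e. $cost_2(S_{i-1}\cup\{p_i\})\ge\rho$. Hence the instant $S_i$ is formed we already have $cost_2(S_i)\ge\rho=\alpha$, so it remains only to prove the matching upper bound $cost_2(S_i)\le\alpha$.

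For the upper bound I would first record the elementary monotonicity fact that $S\subseteq S'$ implies $cost_2(S')\le cost_2(S)$: for every $p\in S$ the two points of $S'\setminus\{p\}$ nearest to $p$ are no farther than the two points of $S\setminus\{p\}$ nearest to $p$ (because $S\setminus\{p\}\subseteq S'\setminus\{p\}$), so $cost_2(p,S')\le cost_2(p,S)$, and then $cost_2(S')=\min_{p\in S'}cost_2(p,S')\le\min_{p\in S}cost_2(p,S')\le cost_2(S)$. Applying this along the chain $S_3\subseteq S_4\subseteq\cdots\subseteq S_i$ produced by the algorithm gives $cost_2(S_i)\le cost_2(S_3)=\alpha$, the last equality being simply the definition of $\alpha$. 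Combining the two bounds yields $cost_2(S_i)=\alpha$. It is probably cleanest to package this as an induction on $i$: the base case $i=3$ is the definition of $\alpha$, and in the inductive step $cost_2(S_i)\le cost_2(S_{i-1})=\alpha$ by the inductive hypothesis and monotonicity, while $cost_2(S_i)\ge\rho=\alpha$ by the line~\ref{alphaline} guarantee.

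I do not expect a genuine obstacle in this lemma. The two points needing care are getting the direction of the monotonicity inequality right for the ``sum of the two nearest distances'' objective, and noticing that $\rho$ equals $\alpha$ exactly because $\lambda=1$ on a line (in the planar case $\lambda=2\sqrt3$ and only $cost_2(S_i)\ge\alpha/(2\sqrt3)$ would follow — this is precisely the reason the line version comes out optimal rather than merely approximate). The genuinely line-specific work — showing that a point meeting the condition of line~\ref{alphaline} always exists, so that the while-loop actually runs to $i=k$ and returns $S_k$ with $cost_2(S_k)=\alpha=cost_2(S^*)$ — is not needed for this statement and is deferred to the subsequent argument (which will rely on Lemma~\ref{lemmaline} and the left-to-right ordering of $S^*$).
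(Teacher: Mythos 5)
Your proof is correct for the lemma as literally stated, but it takes a genuinely different and far more elementary route than the paper's. You sandwich $cost_2(S_i)$: the lower bound $cost_2(S_i)\ge\rho=\alpha$ comes straight from the acceptance test in line~\ref{alphaline} together with $\lambda=1$, and the upper bound comes from the monotonicity $S\subseteq S'\Rightarrow cost_2(S')\le cost_2(S)$, which you justify correctly (the two smallest distances from $p$ to $S'\setminus\{p\}$ are coordinatewise at most those to $S\setminus\{p\}$). The paper instead runs an induction that tracks where each added point sits relative to the sorted optimal solution $S^*=\{s_1^*,\dots,s_k^*\}$: it shows $p_4$ must fall in an interval such as $(s_{o+1}^*,s_{o+2}^*]$ and exhibits an explicit $k$-point superset $S_i'$ of $S_i$ (obtained from $S^*$ by swapping in the chosen points) with $cost_2(S_i')=\alpha$. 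That extra machinery is not needed for the equality $cost_2(S_i)=\alpha$ itself, but it is precisely what proves that a point satisfying the condition of line~\ref{alphaline} \emph{exists} at every iteration, i.e., that the while-loop actually reaches $i=k$. Be aware that your closing remark --- that existence is ``deferred to the subsequent argument'' --- does not match the paper: the theorem that follows simply cites this lemma for that fact and proves nothing further, so if your proof replaced the paper's, the claim that the algorithm returns a set of size $k$ (rather than stalling with $flag=0$) would be established nowhere. Your argument is therefore best presented as a simplification of the cost equality only, with the existence of $p_i$ proved separately (for instance via the paper's extension sets $S_i'$).
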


\begin{proof}
	We use induction to prove $cost_{2}(S_{i})=\alpha$ for $i=4,5 \ldots,k$.
	
	\textbf{Base Case:} Consider the set $S_{4}=S_{3} \cup \{p_{4}\}$ constructed in line number  \ref{line10algo} of Algorithm \ref{algo}. If $s_{o}^*$ is a solution points, and $s_{r}^*$, $s_{t}^*$ are supporting points and  $cost_{2}(p_{4},S_{3}) \geq \alpha$, therefore $p_{4} \notin [s_{o-1}^*,s_{o+1}^*]$ (otherwise one of $s_{o-1}^*$ and $s_{o+1}^*$ will not be supporting point). This implies  $p_{4}$ either lies
	in $[p_{1},s_{o-1}^*)$ or $(s_{o+1}^*,p_{n}]$. Assume $p_4 \in (s_{o+1}^*,p_{n}] $. In Algorithm \ref{algo}, we choose $p_{4}$ such that  $cost_{2}(p_{4},S_{4}) \geq \alpha$ (see line number \ref{alphaline} of Algorithm \ref{algo}) and $cost_{2}(p_{4},S_4)=\min_{q \in P \setminus S_{3}}cost_{2}(q,S_{4})$.  Therefore,  $p_{4} \in (s_{o+1}^*,s_{o+2}^*]$. Let   $S_{4}'=\{s_{1}^*,s_{2}^*,\ldots,s_{o-2}^*\} \cup S_{4} \cup \{s_{o+3}^*,s_{o+4}^*,\ldots,s_{k}^*\}$. Suppose $p_{4}=s_{o+2}^*$ and  we know that $S_{3}=S_{3}^*$ then $S_{4}'=S^*$.  
	So, $cost_{2}(S_{4}')=cost_{2}(S^*)=\alpha$. This implies $cost_{2}(S_4)=\alpha$. 
	Now assume that $p_{4} \in (s_{o+1}^*,s_{o+2}^*)$, then also we will show that $cost_{2}(S_{4}')=\alpha$.  We calculate  $cost_{2}(p_{4}, S_{4}')=d(p_{4},s_{o+1}^*) + d(p_{4},s_{o+3}^*) = d(s_{o+2}^*,s_{o+1}^*) +  d(s_{o+2}^*,s_{o+3}^*)\geq \alpha$ 
	and $cost_{2}(s_{o+3}^*, S_{4}')= d(s_{o+3}^*,p_4) + d(s_{o+3}^*,s_{o+4}^*) \geq d(s_{o+3}^*,s_{o+2}^*) + d(s_{o+3}^*,s_{o+4}^*)\geq \alpha$ (see Figure \ref{figure5}). Thus if $p_{4} \in (s_{o+1}^*,s_{o+2}^*)$, then $cost_{2}(S_4')=\alpha$. Therefore, if $k\geq 4$, then $p_{4}$ exists and  $cost_{2}(S_4)=\alpha$. Similarly, we can prove that if $p_4 \in [p_{1},s_{o-1}^*)$, then $cost_{2}(S_{4}')=\alpha$, where $S_{4}'= \{s_{1}^*,s_{2}^*,\ldots,s_{o-3}^*\} \cup S_{4} \cup \{s_{o+2}^*,s_{o+4}^*,\ldots,s_{k}^*\}$. In this case also $p_{4}$ exists and $cost_{2}(S_4)=\alpha$.
	\begin{figure}[h!]
		
		\centering
		\includegraphics[scale=0.9]{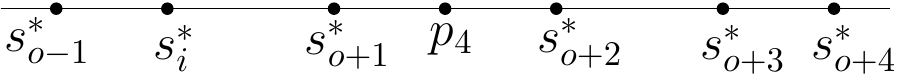}
		\caption{ Snippet of $S_{4}'$}\label{figure5}   			
		
	\end{figure}

	Now, assume that $S_{i}= S_{i-1} \cup \{p_i\}$ for $i<k$ such that $cost_{2}(S_{i}')=\alpha$  and $cost_{2}(S_{i})=\alpha$ where $S_{i}'=\{s_{1}^*,s_{2}^*,\ldots,s_{u}^*\} \cup S_{i} \cup \{s_{v}^*,s_{v+1}^*,\ldots,s_{k}^*\}$. If $p_{i} \in (s_{o}^*, p_{n}]$, then $s_{v-1}^* \in S^*$ is the left most point in the right of   $p_{i}$ and $u \geq  k-(i+k-v+1)=v-i-1$ with each point of $S_{i}$ are on the right side of $s_{u}^*$ (see Figure \ref{figure6}(a)) and if $p_{i} \in [p_{1}, s_{o}^*)$, then $s_{u+1}^*\in S^*$ is the right most point in the left of  $p_{i}$ where $v \geq u+i+1$ (see Figure \ref{figure6}(b)).  
	
	\begin{figure}[h!]
		
		\centering
		\includegraphics[scale=0.9]{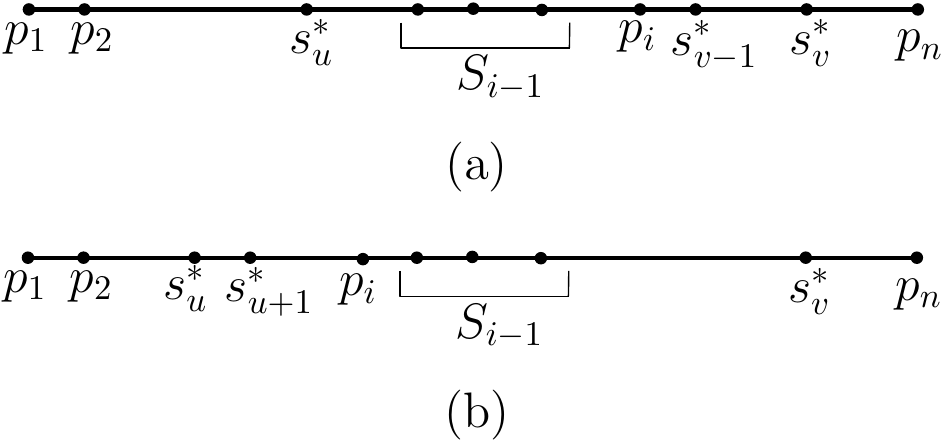}
		\caption{ Placement of  set $S_{i-1} \cup \{p_{i}\}$. }\label{figure6}   			
	\end{figure}
	
	We prove that $cost_{2}(S_{i+1})=\alpha$, where $S_{i+1}=S_{i} \cup \{p_i\}$.  It follows from the fact that size of $S_{i}$ is less than $k$, and the set $\{s_{1}^*,s_{2}^*,\ldots,s_{s_{u}}^*\} \cup \{s_{v}^*,s_{v+1}^*,\ldots,s_{k}^*\} \neq \phi$ and  the  similar arguments discussed in the base case.
\end{proof}
\begin{lemma}\label{prev_lemma}
	The running time of Algorithm \ref{algo} on  line  is $O(n^4)$. 
\end{lemma}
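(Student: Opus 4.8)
The plan is to read the running time directly off the nested-loop structure of Algorithm~\ref{algo} in the case $\gamma=2$ with $P$ on a line, and to use one-dimensional geometry to keep the work done for each starting triple linear in $n$. As a one-time preprocessing step I would sort the $n$ points of $P$ along the line in $O(n\log n)$ time, and keep every set $S_i$ that arises during the algorithm stored as a sorted doubly linked list together with the list of gaps (maximal empty intervals) between consecutive members of $S_i$ and the current value $cost_2(S_i)$. This preprocessing cost is absorbed into $O(n^4)$.

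For the outer loop: since $\gamma=2$, the outer \textbf{for} loop of Algorithm~\ref{algo} (line~\ref{line2algo}) ranges over all subsets $S_3\subseteq P$ with $|S_3|=3$, and there are $\binom n3=O(n^3)$ of these; for a fixed $S_3$, computing $\alpha=cost_2(S_3)$, setting $\rho=\alpha/\lambda$, and testing $\rho>\beta$ take $O(1)$ time. (On a line one even has $cost_2(\{a,b,c\})=d(a,c)$, the distance between the two extreme points, since the middle point realises the minimum; but all that is really needed here is that this set-up is $O(1)$.) Hence it suffices to show that, for a single fixed starting triple $S_3$, the entire \textbf{while} loop — over all of its iterations together — runs in $O(n)$ time; multiplying by the $O(n^3)$ choices of $S_3$ then gives the $O(n^4)$ bound.

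So fix $S_3$. The \textbf{while} loop performs at most $k-3=O(n)$ successful insertions (line~\ref{line10algo}), each enlarging $S_i$ by one point, followed by at most one unsuccessful attempt, so it has $O(n)$ iterations, and the content of one iteration is line~\ref{alphaline}: among the candidates $q\in P\setminus S_i$, find one minimising $cost_2(q,S_i)$ subject to $cost_2(S_i\cup\{q\})\ge\rho$, and insert it. I would argue the total work of all these iterations is $O(n)$ using two facts specific to the line. First, if $q$ lies in a gap of $S_i$ then $cost_2(q,S_i)$ is determined by the $O(1)$ members of $S_i$ nearest to $q$, and when those two nearest are the endpoints of the gap it equals exactly the width of that gap, independently of the precise location of $q$ inside it; consequently the minimising admissible candidate lies in the narrowest admissible gap, which can be tracked as the sets $S_i$ grow. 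Second, inserting a point splits one gap into two and changes the nearest-neighbour lists — hence the values $cost_2(\cdot,S_i)$ — of only $O(1)$ members of $S_i$, so the sorted list, the gap list, the stored $cost_2(S_i)$, and the admissibility test $cost_2(S_i\cup\{q\})\ge\rho$ (which involves only those $O(1)$ points together with $q$) are all maintainable with $O(1)$ amortised work per inserted point; each point of $P$ is thereby charged only $O(1)$ over the whole extension.

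The main obstacle is precisely this last step: turning the informal statement ``on a line the quantity $cost_2$ and all its changes are purely local'' into a rigorous proof that the greedy extension of one fixed triple costs only $O(n)$ — in particular, a clean description of which candidate line~\ref{alphaline} actually selects (the narrowest admissible gap) and a proof that, after the initial set-up, one never has to re-scan all $O(n)$ candidates. A naive implementation that re-examines every candidate at every step gives $O(n)$ per step and hence $O(n^5)$ overall, so the one-dimensional locality must genuinely be exploited to reach $O(n^4)$; the remaining ingredients — the $O(n^3)$ subset count, the $O(n)$ iteration count of the \textbf{while} loop, and the $O(n\log n)$ sort — are routine.
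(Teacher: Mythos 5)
Your decomposition --- $O(n^3)$ starting triples multiplied by an $O(n)$ amortized bound for the entire greedy extension of each triple --- is exactly the paper's argument, which in fact simply asserts that ``for choosing the remaining $k-3$ points, the total amortize time taken by the algorithm is $O(n)$'' without any of the sorted-list/gap-maintenance justification you sketch. The ``main obstacle'' you honestly flag (proving that line~\ref{alphaline} never forces a rescan of all $O(n)$ candidates) is therefore not resolved in the paper either, so your proposal matches the published proof and is, if anything, more detailed.
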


\begin{proof}
	Since it is a $2$-dispersion problem on a line, so  algorithm starts by setting $\lambda=1$ in line number \ref{line1} of Algorithm \ref{algo}, and then compute solution set for each distinct $S_{3} \subseteq P$ independently. Now, for each $S_{3}$,  algorithm selects a point iteratively based  on  greedy choice (see line number \ref{alphaline} of Algorithm \ref{algo}).  Now, for choosing remaining $(k-3)$ points, the total amortize time taken by the algorithm is $O(n)$.  So, the overall time complexity of Algorithm \ref{algo} on  line consisting of $n$ points is  $O(n^4)$. 
\end{proof}

\begin{theorem}
	Algorithm \ref{algo} produces an optimal solution for the $2$-dispersion problem on a line in polynomial time. 
\end{theorem}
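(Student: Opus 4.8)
The plan is to derive the theorem as a straightforward synthesis of the three preceding lemmas, so that essentially all of the real work has already been done. The key observation is that the loop at line \ref{line2algo} of Algorithm \ref{algo} enumerates every $3$-element subset $S_3\subseteq P$, so in particular it eventually considers the triple $S_3=S_3^*=\{s_o^*,s_r^*,s_t^*\}$ made up of the solution point and the two supporting points of some fixed optimal solution $S^*$. First I would pin down that $cost_2(S_3^*)=cost_2(S^*)$: by Lemma \ref{lemmaline} the three points $s_r^*,s_o^*,s_t^*$ are consecutive on $L$ with $s_r^*$ and $s_t^*$ on opposite sides of $s_o^*$, so $d(s_r^*,s_t^*)=d(s_r^*,s_o^*)+d(s_o^*,s_t^*)$; hence $cost_2(s_r^*,S_3^*)=2\,d(s_r^*,s_o^*)+d(s_o^*,s_t^*)\ge cost_2(s_o^*,S_3^*)$ and symmetrically $cost_2(s_t^*,S_3^*)\ge cost_2(s_o^*,S_3^*)$, so the minimum over the triple is attained at $s_o^*$ and equals $d(s_o^*,s_r^*)+d(s_o^*,s_t^*)=cost_2(S^*)$.

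Next I would run the while loop starting from this triple. Since $\gamma=2$ and the points lie on a line, line \ref{line1} sets $\lambda=1$, so $\rho=\alpha/\lambda=\alpha=cost_2(S_3^*)=cost_2(S^*)$. By Lemma \ref{implemma_line}, every point appended at line \ref{line10algo} leaves the $2$-dispersion cost of the current set equal to $\alpha$, and (as established inside that lemma's proof) such a point always exists as long as the set has fewer than $k$ elements. Therefore this branch terminates with $i=k$, and at line \ref{line15algo} it records a $k$-subset $S_k$ with $cost_2(S_k)\ge\rho=cost_2(S^*)$ together with $\beta=\rho=cost_2(S^*)$. Since $S^*$ is optimal, no $k$-subset can have $2$-dispersion cost exceeding $cost_2(S^*)$, so in fact $cost_2(S_k)=cost_2(S^*)$: the set found in this branch is optimal.

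Finally I would do the bookkeeping across branches. The final $\beta$ is the last value $\rho$ set by a branch that both passed the test $\rho>\beta$ and reached size $k$; every such branch yields a recorded $S_k$ with $\rho\le cost_2(S_k)\le cost_2(S^*)$, so $\beta$ can never exceed $cost_2(S^*)$, while the branch $S_3=S_3^*$ (or an earlier branch with the same value) forces $\beta\ge cost_2(S^*)$; hence the returned $\beta$ equals $cost_2(S^*)$ and the returned $S_k$ is an optimal $k$-subset. Polynomial running time is then immediate from Lemma \ref{prev_lemma}, which gives the $O(n^4)$ bound. The only mildly delicate points are the identity $cost_2(S_3^*)=cost_2(S^*)$ and checking that the enumerate-and-keep-the-best bookkeeping never discards the optimal branch; the genuinely hard step — the greedy invariant that the cost stays equal to $\alpha$ while the set is grown to size $k$ — is already handled by Lemma \ref{implemma_line}.
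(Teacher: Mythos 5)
Your proposal is correct and follows essentially the same route as the paper: invoke Lemma \ref{implemma_line} to keep the cost equal to $\alpha=cost_{2}(S_{3}^{*})=cost_{2}(S^{*})$ throughout the greedy growth of the branch starting from $S_{3}=S_{3}^{*}$, and Lemma \ref{prev_lemma} for the running time. The only difference is that you explicitly verify the identity $cost_{2}(S_{3}^{*})=cost_{2}(S^{*})$ via Lemma \ref{lemmaline} and spell out the cross-branch bookkeeping for $\beta$, both of which the paper leaves implicit.
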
 
\begin{proof}
	Follows from  Lemma \ref{implemma_line}  that $cost_{2}(S_{i}) = \alpha=cost_{2}(S_{3}^*)$ for $3 \leq i \leq k$, where $S_{3}=\{s_{o}^*,s_{r}^*,s_{t}^*\}$. Therefore, $cost_{2}(S_k) = \alpha$. Also,  Lemma \ref{prev_lemma} says that Algorithm \ref{algo} computes $S_{k}$ in polynomial time. Thus, the theorem. 
\end{proof}

\subsection{$1$-Dispersion Problem in $\mathbb{R}^2$} 
In this section, we show the effectiveness of Algorithm \ref{algo} by  showing $2$-factor approximation result for the $1$-dispersion problem in $\mathbb{R}^2$. Here, we set $\gamma=1$ as input along with input $P$ and $k$. We also set $\lambda=2$ in line number \ref{line1} of the algorithm \ref{algo}.

Let $S^{*} $  be an optimal solution for a given instance $(P,k)$ of $1$-dispersion problem and  $S_{k} \subseteq P$ be a solution returned by our greedy Algorithm \ref{algo} provided $\gamma=1$ as an additional input. Let $s_{o}^{*} \in S^{*}$  a solution point, i.e.,  $cost_{1}(S^{*}) = d(s_{o}^{*}, s_{r}^{*})$  such that     $s_{r}^{*}$ is the closest points  of $s_{o}^{*}$ in $S^{*}$. We call $s_{r}^{*}$ as supporting point. Let $\alpha=d(s_{o}^*, s_{r}^*)$ and $\rho=\frac{\alpha}{2}$.

We define a disk  $D_{i}$ centered at  $p_{i} \in P$ as follows: $D_{i} = \{p_j \in \mathbb{R}^2|d(p_{i}, p_{j}) \leq \rho\}$. Let $D=\{D_{i}\mid p_{i} \in P\}$. Let $D^*$ be the subsets of $D$ corresponding  to disks centered at  points in $S^*$. If $d(p_{i},p_{j}) <  \rho$, then we say that $p_{j}$ is \emph{properly contained} in $D_{i}$ and if $d(p_{i},p_{j}) \leq  \rho$, then we say that  $p_{j}$ is \emph{ contained} in $D_{i}$.

\begin{lemma} \label{lemma1_1disp}
	For any point $s \in P $, if $D^s=\{q \in \mathbb{R}^2 \mid d(s,q) \leq \rho\}$ then $D^s$ properly contains at most one point of the optimal set $S^*$.
\end{lemma}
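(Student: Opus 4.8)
The plan is to argue by contradiction using only the triangle inequality, exactly parallel to (but strictly simpler than) the proof of Lemma \ref{lemma2}: here no packing/equilateral-triangle argument is needed, because for the $1$-dispersion problem $cost_1$ is a single edge length rather than a sum of two distances.

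Suppose, for contradiction, that two distinct points $p_a, p_b \in S^*$ are both properly contained in $D^s$, so that $d(s, p_a) < \rho$ and $d(s, p_b) < \rho$. Applying the triangle inequality along $p_a \to s \to p_b$ gives $d(p_a, p_b) \le d(p_a, s) + d(s, p_b) < 2\rho = \alpha$. On the other hand, $\alpha = d(s_o^*, s_r^*) = cost_1(S^*) = \min\{d(p,q) \mid p,q \in S^*\}$, so every pair of distinct points of $S^*$ is at distance at least $\alpha$; in particular $d(p_a, p_b) \ge \alpha$, contradicting the previous inequality. Hence $D^s$ properly contains at most one point of $S^*$.

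The only point to be careful about is that \emph{properly contained} must refer to the strict inequality $d(s,\cdot) < \rho$: this strictness is exactly what upgrades the triangle-inequality bound to $d(p_a,p_b) < \alpha$, which is what contradicts the minimality of $\alpha$; if we only had $d(s,\cdot) \le \rho$ we would get $d(p_a,p_b) \le \alpha$, consistent with optimality and hence insufficient. I expect no genuine obstacle here — the lemma is simply the $1$-dispersion analogue of Lemma \ref{lemma2}, and the downstream facts about how many disks of $D^*$ can (properly) contain a common point will follow as in Lemma \ref{lemma3} and Corollaries \ref{cor1} and \ref{cor2}, which feed the bipartite-graph counting argument of the eventual approximation theorem.
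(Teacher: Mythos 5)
Your proof is correct and is essentially identical to the paper's: both assume two points of $S^*$ are properly contained in $D^s$, apply the triangle inequality through $s$ to get $d(p_a,p_b) < 2\rho = \alpha$, and contradict the optimality of $S^*$. Your explicit remark about why the strictness of \emph{properly contained} is needed is a nice touch, but the argument itself matches the paper's.
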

\begin{proof}
	On the contrary assume that  $p_{a},p_{b} \in S^*$ such that $p_{a},p_{b}$ are properly contained in $D^s$. If two points $p_{a} $ and $p_{b}$ are properly contained in $D^s$, then $d(p_{a},p_{b})< d(p_{a},s)+d(p_{b},s) <\frac{\alpha}{2}+\frac{\alpha}{2}=\alpha$, which leads to a contradiction to the optimality of $S^*$. Thus, the lemma.
\end{proof}	
\begin{lemma}
	For any  two points $p_{a}, p_{b} \in S^*$, there does not exist any point  $s \in \mathbb{R}^2$ that is properly contained in  $D_a \cap D_b$. \label{lemma2_1disp}
\end{lemma}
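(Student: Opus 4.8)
The plan is to mimic exactly the argument used for Lemma \ref{lemma3}, only now invoking the $1$-dispersion packing bound of Lemma \ref{lemma1_1disp} in place of the $2$-dispersion bound of Lemma \ref{lemma2}. First I would argue by contradiction: suppose there is a point $s \in \mathbb{R}^2$ that is properly contained in $D_a \cap D_b$. By the definition of "properly contained" this means $d(p_a, s) < \rho$ and $d(p_b, s) < \rho$ simultaneously.

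Next I would turn the picture around by centering a disk at $s$. Consider the disk $D^s = \{q \in \mathbb{R}^2 \mid d(s,q) \leq \rho\}$. The two inequalities $d(s, p_a) < \rho$ and $d(s, p_b) < \rho$ say precisely that both $p_a$ and $p_b$ are properly contained in $D^s$. Since $p_a, p_b \in S^*$, this exhibits two distinct points of the optimal set $S^*$ properly contained in a single radius-$\rho$ disk, which directly contradicts Lemma \ref{lemma1_1disp} (that lemma asserts any such disk properly contains at most one point of $S^*$). Hence no such $s$ can exist, proving the lemma.

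I do not expect any genuine obstacle here: the statement is essentially a reformulation of Lemma \ref{lemma1_1disp} via the symmetry $d(p_a,s) = d(s,p_a)$, exactly as Lemma \ref{lemma3} was a reformulation of Lemma \ref{lemma2}. The only point worth a moment's care is consistency of the strict-versus-weak inequality conventions — "properly contained" uses the strict inequality $d(\cdot,\cdot) < \rho$, and this strictness is what lets us conclude $p_a, p_b$ are properly (not merely) contained in $D^s$, which is the form in which Lemma \ref{lemma1_1disp} is stated. As with Lemma \ref{lemma3}, I would expect this lemma to then feed into corollaries bounding how many disks of $D^*$ can contain or properly contain a fixed point $p \in P$, and ultimately into a bipartite double-counting argument analogous to Lemma \ref{lemma010}.
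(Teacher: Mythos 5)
Your proposal is correct and follows exactly the same route as the paper: assume $s$ is properly contained in $D_a \cap D_b$, flip the viewpoint to the disk $D^s$ centered at $s$, observe that $p_a, p_b \in S^*$ are then both properly contained in $D^s$, and contradict Lemma \ref{lemma1_1disp}. No differences worth noting.
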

\begin{proof}
	On the contrary assume that $s$ is properly contained in   $D_{a} \cap D_{b} $. This implies  $d(p_{a}, s) < \frac{\alpha}{2}$ and
	$d(p_{b}, s) <  \frac{\alpha}{2}$ . Therefore, the disk $D^s =\{q \in \mathbb{R}^2 \mid d(s,q) \leq \rho \}$ properly contains two points $p_{a}$ and $p_{b}$, which is a contradiction to Lemma \ref{lemma1_1disp}. Thus, the lemma.  
	
\end{proof}

\begin{corollary} \label{cor11}
	For any point $s \in P $, if $D' \subseteq D^*$ is the set of disks that contain  $s$, then $|D'| \leq 2$ and $s$ lies on the boundary of both the  disk in $D'$.  
\end{corollary}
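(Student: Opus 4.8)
The statement to prove is Corollary~\ref{cor11}, which is the $1$-dispersion analogue of Corollary~\ref{cor1}. It asserts that for any point $s \in P$, if $D' \subseteq D^*$ is the set of disks of $D^*$ that contain $s$, then $|D'| \leq 2$, and moreover $s$ lies on the boundary of each disk in $D'$.

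\medskip

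\noindent\textbf{Proof proposal.} The plan is to derive everything from Lemma~\ref{lemma2_1disp} by a short contrapositive-style argument, exactly paralleling how Corollary~\ref{cor1} was obtained from Lemma~\ref{lemma3}. First I would argue that no disk of $D^*$ can \emph{properly} contain $s$: if some $D_a \in D^*$ (with $p_a \in S^*$) properly contained $s$, then $d(p_a,s) < \rho$, and since $D_a$ is centered at $p_a$ this is a strict inequality; but then, by the symmetry of the Euclidean distance, the disk $D^s = \{q : d(s,q) \le \rho\}$ would properly contain $p_a$. Combined with the assumption that $s$ lies (at least on the boundary) in some second disk $D_b \in D^*$, one would want a second point of $S^*$ properly inside $D^s$ — so the clean route is: if $s$ is properly contained in even one disk $D_a\in D^*$ and contained in any other disk $D_b\in D^*$, then $s \in D_a \cap D_b$ with $d(p_a,s)<\rho$, so $s$ is properly contained in $D_a \cap D_b$, contradicting Lemma~\ref{lemma2_1disp}. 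Hence if $|D'|\ge 2$, then $s$ is on the boundary of every disk in $D'$, i.e.\ $d(p_i,s)=\rho$ for each $D_i \in D'$; and if $|D'|\le 1$ the boundary claim is vacuous or trivial. This establishes the ``$s$ lies on the boundary'' half.

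\medskip

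\noindent For the bound $|D'|\le 2$, suppose for contradiction that three disks $D_a, D_b, D_c \in D^*$ all contain $s$, with $p_a,p_b,p_c \in S^*$ distinct. By the previous paragraph each contains $s$ on its boundary, so $d(p_a,s)=d(p_b,s)=d(p_c,s)=\rho$. Then the disk $D^s=\{q : d(s,q)\le\rho\}$ contains all three of $p_a,p_b,p_c$ on its boundary. This does not immediately contradict Lemma~\ref{lemma1_1disp}, which only forbids two points of $S^*$ being \emph{properly} contained in $D^s$. So the argument needs the sharper fact that two distinct boundary points $p_a,p_b$ of $D^s$ are already too close: since they lie on a circle of radius $\rho$, the distance $d(p_a,p_b)$ can be as large as $2\rho = \alpha$ only when they are diametrically opposite, and three points on the circle cannot be pairwise diametrically opposite, so at least one pair among $\{p_a,p_b,p_c\}$ satisfies $d(\cdot,\cdot) < 2\rho = \alpha$, contradicting the optimality $cost_1(S^*)=\alpha$. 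Thus $|D'|\le 2$.

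\medskip

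\noindent The main obstacle is the boundary/diameter case analysis for the $|D'|\le 2$ bound: Lemma~\ref{lemma1_1disp} as stated is only about \emph{proper} containment, so one cannot invoke it directly once three disks meet at a common boundary point; one must instead go back to the optimality of $S^*$ and use the elementary geometric fact that three distinct points on a circle of radius $\rho$ cannot be pairwise at distance $\ge 2\rho$. This is routine but is the place where a little care is needed. Everything else is a mechanical transcription of the $2$-dispersion argument (Lemma~\ref{lemma3}, Corollary~\ref{cor1}) to the $1$-dispersion setting, with ``three points / two points'' replaced by ``two points / one point''.
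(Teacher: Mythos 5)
Your proposal is correct in substance and rests, like the paper, on Lemma~\ref{lemma2_1disp}; but the paper's entire proof is the one line ``Follows from Lemma~\ref{lemma2_1disp}'', whereas you correctly identify that this lemma alone does not yield $|D'|\le 2$: if three disks of $D^*$ all meet $s$ only on their boundaries, then no point is \emph{properly} contained in any pairwise intersection, so Lemma~\ref{lemma2_1disp} (and likewise Lemma~\ref{lemma1_1disp}) are silent. Your patch --- three distinct points on a circle of radius $\rho$ about $s$ cannot be pairwise at distance $2\rho$, so some pair of $S^*$ would be at distance strictly less than $\alpha$, contradicting $cost_{1}(S^*)=\alpha$ --- is exactly the missing ingredient, and on this point your treatment is more rigorous than the paper's. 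One small imprecision in the boundary half: you argue that if $s$ is properly inside $D_a$ and merely contained in $D_b$, then $s$ is ``properly contained in $D_a\cap D_b$,'' but in the paper's usage (see the proof of Lemma~\ref{lemma2_1disp}) proper containment in the intersection means $d(p_a,s)<\rho$ \emph{and} $d(p_b,s)<\rho$, which fails when $s$ sits exactly on the boundary of $D_b$. The conclusion still holds, but you should route that case through the optimality of $S^*$ directly: $d(p_a,p_b)\le d(p_a,s)+d(s,p_b)<\rho+\rho=\alpha$ contradicts the fact that every pair of points of $S^*$ is at distance at least $\alpha$; this is the same triangle-inequality step already used to prove Lemma~\ref{lemma1_1disp}, so the fix is one line.
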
 
\begin{proof}
	Follows from   Lemma \ref{lemma2_1disp}.  
\end{proof}
\begin{corollary} \label{cor12}
	For any point $s \in P $, if $D'' \subseteq D^*$ be a subset of disks that properly contains  point $s$, then $|D''| \leq  1$.
\end{corollary}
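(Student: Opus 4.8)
The plan is to argue by contradiction, in exact parallel with the proof of Corollary~\ref{cor2} in the $2$-dispersion setting. Suppose, for contradiction, that $|D''| \geq 2$. Then there exist two distinct disks $D_a, D_b \in D'' \subseteq D^*$, centered at points $p_a, p_b \in S^*$ respectively. By the definition of $D''$, the point $s$ is properly contained in both $D_a$ and $D_b$, i.e. $d(p_a, s) < \rho$ and $d(p_b, s) < \rho$. Hence $s$ is properly contained in $D_a \cap D_b$.

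The next step is to invoke Lemma~\ref{lemma2_1disp}, which states that for any two points $p_a, p_b \in S^*$ there is no point of $\mathbb{R}^2$ that is properly contained in $D_a \cap D_b$. This contradicts the conclusion of the previous paragraph, so the assumption $|D''| \geq 2$ must fail, and therefore $|D''| \leq 1$, as claimed. (One could alternatively route the argument through Corollary~\ref{cor11}: two disks of $D^*$ containing $s$ would force $s$ onto the boundary of both, contradicting the strict inequality built into ``properly contained'' — but the direct appeal to Lemma~\ref{lemma2_1disp} is cleaner.)

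There is no real obstacle here; the statement is an immediate consequence of Lemma~\ref{lemma2_1disp}, just as Corollary~\ref{cor2} was an immediate consequence of Lemma~\ref{lemma3}. The only point requiring a little care is to keep the distinction between ``contained'' and ``properly contained'' straight throughout: Corollary~\ref{cor11} gives the bound $|D'| \leq 2$ for the set $D'$ of disks that merely contain $s$ (possibly with $s$ on the boundary), whereas the strict inequality defining ``properly contains'' is precisely what yields the sharper bound $|D''| \leq 1$ here. This corollary will then feed into the degree-counting (bipartite graph) argument for the $1$-dispersion analogue of Lemma~\ref{lemma010}, in the same way Corollary~\ref{cor2} was used for the $2$-dispersion case.
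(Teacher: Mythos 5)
Your argument is correct and matches the paper's, which simply states that the corollary follows from Lemma~\ref{lemma2_1disp} (and Corollary~\ref{cor11}); your direct contradiction via Lemma~\ref{lemma2_1disp} is exactly the intended reasoning, spelled out.
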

\begin{proof}
	Follows from Lemma \ref{lemma2_1disp} and Corollary \ref{cor11}. 
\end{proof}

\begin{lemma} \label{lemma1displast}
	Let $M\subseteq P$ be a set of points such that $|M| < k$. If $cost_{2}(M)\geq \frac{\alpha}{2}$, then there exists at least one disk $D_{j} \in D^*=\{D_{1},D_{2}, \ldots, D_{k}\}$ that does not properly contain any  point from the set $M$.
\end{lemma}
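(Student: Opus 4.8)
The plan is to reproduce the double-counting (pigeonhole) argument used in the proof of Lemma \ref{lemma010}, but now with Corollary \ref{cor12} replacing Corollary \ref{cor2}; the threshold in the count drops from $2$ to $1$ because here we want a disk that properly contains \emph{no} point of $M$. I would argue by contradiction, assuming that every disk $D_j \in D^*$ properly contains at least one point of $M$.

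First I would build a bipartite graph $G(M \cup D^*, {\cal E})$ whose two partite classes are $M$ and $D^* = \{D_1, D_2, \ldots, D_k\}$, and where $(u, D_j) \in {\cal E}$ if and only if $u \in M$ is properly contained in $D_j$. Under the contradiction hypothesis each of the $k$ vertices of $D^*$ has degree at least $1$, so $|{\cal E}| \geq k$. On the other side, Corollary \ref{cor12} tells us that any point of $P$ --- in particular any point of $M$ --- is properly contained in at most one disk of $D^*$, so every vertex of $M$ has degree at most $1$ and hence $|{\cal E}| \leq |M|$. Since $|M| < k$ by assumption, this yields $k \leq |{\cal E}| \leq |M| < k$, a contradiction. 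Therefore some disk $D_j \in D^*$ properly contains no point of $M$, which is exactly the assertion of the lemma.

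I do not expect a genuine obstacle here: the whole argument is a handshake count. The only things to be careful about are that $|D^*| = k$ (which holds because $S^*$ consists of $k$ points and the disks of $D^*$ are indexed by those points), and that ``properly contained'' is used consistently on both sides of the count so that Corollary \ref{cor12} really does bound the degrees on the $M$ side. The stated lower bound on the cost of $M$ is not actually used in this argument; it is carried along only because it is the form in which the lemma will be applied when building the solution set $S_i$.
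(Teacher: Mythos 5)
Your proposal is correct and is essentially identical to the paper's own proof: the same contradiction hypothesis, the same bipartite graph $G(M \cup D^*, {\cal E})$, and the same degree count using Corollary \ref{cor12} on the $M$ side against the lower bound of $k$ on the $D^*$ side. Your side remark that the hypothesis $cost_{2}(M)\geq \frac{\alpha}{2}$ is never actually invoked in the counting argument is also accurate of the paper's proof.
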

\begin{proof}
	On the  contrary, assume that each   $D_{j} \in D^*$ properly contains at least one point from the set $M$. Construct a bipartite graph $G(M \cup  D^*, {\cal E})$ as follows: (i) $M$ and $D^*$ are two partite vertex sets, and (ii) for $u\in M$, $(u,  D_j) \in {\cal E}$ if and only if $u$  is properly contained in $ D_j$. According to assumption, each disk $ D_j$ contains at least $1$ points from the set $M$. Therefore, the total degree of the vertices in $ D^*$ in $G$ is at least $k$. Note that $| D^*| = k$. On the other hand, the total degree of the vertices in $M$ in   $G$ is at most $|M|$ (see Corollary  \ref{cor12}). Since $|M| < k$, the total degree of the vertices in $M$ in   $G$ is less than $k$, which leads to a contradiction that the total degree of the vertices in $ D^*$ in   $G$ is at least $k$. Thus, there exist at least one disk $D_{j} \in D^*$ such that $D_{j}$ does not properly  contain any point from the set $M$.  
\end{proof}

\begin{theorem}
	Algorithm \ref{algo} produces a $2$-factor approximation result for the $1$-dispersion problem in $\mathbb{R}^2$.
\end{theorem}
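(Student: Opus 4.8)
The plan is to mirror, almost verbatim, the argument of Theorem~\ref{thm1}. Set $\gamma=1$ and $\lambda=2$ (line~\ref{line1} of Algorithm~\ref{algo}), so that $\rho=\frac{\alpha}{2}=\frac{cost_{1}(S^*)}{2}$, where $S^*$ is an optimal solution, $s_{o}^{*}$ its solution point and $s_{r}^{*}$ its supporting point, with $\alpha=d(s_{o}^{*},s_{r}^{*})=cost_{1}(S^*)$. Among all iterations of the outer \textbf{for} loop I single out the one in which the starting pair is $S_{2}=\{s_{o}^{*},s_{r}^{*}\}$. Since Algorithm~\ref{algo} maximises $cost_{1}$ over all starting sets $S_{2}\subseteq P$, and any successful run produces a set of cost at least $\rho$, it suffices to show that in this particular iteration the \textbf{while} loop runs all the way up to $i=k$ and outputs a set $S_{k}$ with $cost_{1}(S_{k})\ge\rho$.

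The core is an induction on $i$ establishing the invariant $cost_{1}(S_{i})\ge\rho$ for $i=2,3,\dots,k$. The base case is immediate: $cost_{1}(S_{2})=d(s_{o}^{*},s_{r}^{*})=\alpha>\rho$. For the inductive step, suppose $2\le i<k$ and $cost_{1}(S_{i})\ge\rho$. Apply Lemma~\ref{lemma1displast} with $M=S_{i}$ (valid since $|S_{i}|=i<k$ and $cost_{1}(S_{i})\ge\frac{\alpha}{2}$): there is a disk $D_{j}\in D^*$, centered at some $p_{j}\in S^*$, that properly contains no point of $S_{i}$. Hence $d(p_{j},p)\ge\rho$ for every $p\in S_{i}$, so $cost_{1}(p_{j},S_{i}\cup\{p_{j}\})=\min_{p\in S_{i}}d(p_{j},p)\ge\rho$; and for any $p_{x}\in S_{i}$, its nearest neighbour in $S_{i}\cup\{p_{j}\}$ is either its nearest neighbour in $S_{i}$ (at distance $\ge cost_{1}(S_{i})\ge\rho$) or $p_{j}$ (at distance $\ge\rho$), so $cost_{1}(p_{x},S_{i}\cup\{p_{j}\})\ge\rho$ as well. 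Thus $cost_{1}(S_{i}\cup\{p_{j}\})\ge\rho$, i.e.\ an admissible point exists, and by line~\ref{alphaline} of Algorithm~\ref{algo} the algorithm adds some point $p$ with $cost_{1}(S_{i+1})=cost_{1}(S_{i}\cup\{p\})\ge\rho$, completing the induction. Consequently the \textbf{while} loop reaches $i=k$ and $cost_{1}(S_{k})\ge\rho=\frac{cost_{1}(S^*)}{2}$, which is the claimed $2$-factor guarantee.

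I expect no genuine obstacle here: once Lemma~\ref{lemma1displast} is available, everything is routine, and Lemma~\ref{lemma1displast} itself rests only on the disk-packing/bipartite-counting fact behind Lemma~\ref{lemma2_1disp} and Corollary~\ref{cor12}, namely that at most one disk of $D^*$ properly contains a given point of $P$, so $k$ such disks cannot each own a private point of a set of size $i<k$. The only points needing a word of care are (i) the reduction to the single starting pair $\{s_{o}^{*},s_{r}^{*}\}$, which is legitimate because Algorithm~\ref{algo} reports the best outcome over all $S_{2}\subseteq P$, and (ii) the observation that the conservative tie-breaking rule of line~\ref{alphaline} (minimising $cost_{1}(p,S_{i})$ subject to the $\ge\rho$ constraint) is immaterial to the bound, since the induction uses only the \emph{existence} of an admissible point, not which one is picked. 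This subsection is included mainly to show that the common framework of Algorithm~\ref{algo} subsumes the classical $2$-approximation of Ravi et al.~\cite{ravi} for the $1$-dispersion problem.
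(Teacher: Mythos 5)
Your proposal is correct and follows essentially the same route as the paper: reduce to the outer-loop iteration starting from $S_{2}=\{s_{o}^{*},s_{r}^{*}\}$, then induct on $i$ with the invariant $cost_{1}(S_{i})\ge\rho$, using Lemma~\ref{lemma1displast} to exhibit a disk of $D^*$ whose center can be added without dropping below $\rho$. Your explicit check that the old points $p_{x}\in S_{i}$ also retain cost at least $\rho$ after the insertion is a small added precision over the paper's wording, but not a different argument.
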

\begin{proof}

	Since it is a $1$-dispersion problem, so $\gamma=1$ and set $\lambda=2$ in line number \ref{line1} of the algorithm.
	Now, assume $\alpha=cost_{1}(S^*)$ and $\rho=\frac{\alpha}{\lambda}=\frac{cost_{1}(S^*)}{2}$, where $S^*$ is the optimum solution.
	Here, we show that   Algorithm \ref{algo} returns a solution  set $S_{k}$ of size $k$ such that $cost_{1}(S_{k}) \geq \rho$. More precisely, we show that Algorithm \ref{algo} returns a solution $S_{k}$ of size $k$ such that  $cost_{1}(S_{k})\geq \rho$ and $S_{k} \supseteq \{s_{o}^{*},s_{r}^{*}\}$, where $s_{o}^*$ is the solution point and $s_{r}^*$ is the supporting point. Our objective is to show that if $S_{2}=\{ s_{o}^{*},s_{r}^{*}\}$  in  line number \ref{line2algo}  of  Algorithm \ref{algo}, then it computes a solution $S_{k}$ of size $k$ such that $cost_{1}(S_{k}) \geq \rho$. Note that  any other solution returned by  Algorithm \ref{algo} has a $1$-dispersion cost better than $\frac{cost_{1}(S^*)}{2}$. Therefore, it is sufficient to prove that if  $S_{2}=\{ s_{o}^{*},s_{r}^{*}\}$ in line number \ref{line2algo} of  Algorithm \ref{algo}, then the size of  $S_{k}$ (updated) in line number \ref{line15algo}  of  Algorithm \ref{algo} is $k$ as every time Algorithm \ref{algo} added a point (see line number \ref{line10algo}) into the set  with the property that $1$-dispersion  cost of the updated set is greater than or equal to  $ \rho= \frac{cost_{1}(S^*)}{2}$. Therefore,  we consider $S_{2}=\{ s_{o}^{*},s_{r}^{*}\}$ in  line number \ref{line2algo}  of  Algorithm \ref{algo}.

	We use induction to  establish the condition    $cost_{1}(S_{i}) \geq \rho$  for each $i=3,4,\ldots k$. Since $S_{2}=S_{2}^*$, therefore $cost_{1}(S_{2})=cost_{1}(S_{2}^*)=\alpha$ holds. Now, assume that the condition $cost_{1}(S_{i})\geq \rho$ holds for each $i$ such that  $3\leq i < k$. We will prove that the condition $cost_{1}(S_{i+1})\geq \rho$ holds for $(i+1)$ too.

	Let $D^*$ be the set of disks centered at the points in $S^*$ such that the radius of each disk be $\rho=\frac{\alpha}{2}$. Since $i<k$ and $S_{i}\subseteq P$ with condition $cost_{1}(S_{i})\geq \rho=\frac{\alpha}{2}$, there exist at least one disk, say $D_{j} \in D^*$ that does not contain any  point from the set  $S_{i}$ (see Lemma \ref{lemma1displast}). We will show that $cost_{1}(S_{i+1})=cost_{1}(S_{i} \cup \{p_{j}\}) \geq \rho$, where $p_{j}$ is the center of the disk $D_{j}$.
	
	Now, if $D_j$ does not properly  contain any point from the set $S_{i}$, then  the closest point  of $p_{j} \in S^*$ may lie on the boundary of the disk $D_{j}$ (by Corollary \ref{cor11}) or outside the disk $D_{j}$ (by Lemma \ref{lemma1displast}). In both the cases, distance of $p_{j}$ to any point of the set $S_{i}$ is greater than or equal to $\rho$ (see Figure \ref{figure7}). Since there exists at least one point $p_{j}  \in P \setminus S_i$ such that $cost_{1}(S_{i+1})=cost_{1}(S_{i} \cup \{p_{j}\}) \geq \rho$, therefore Algorithm \ref{algo} will always choose a point (see line number \ref{alphaline}  of Algorithm \ref{algo}) in the iteration $i+1$ such that  $cost_{1}(S_{i+1})  \geq \rho $.
	
		So, we can conclude that   $cost_{1}(S_{i+1})  \geq \rho $ and thus condition holds for $(i+1)$ too.

	\begin{figure}[h!]
		
		\centering
		\includegraphics[scale=1]{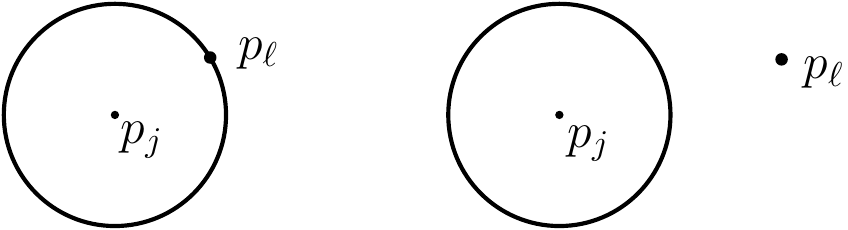}
		\caption{ Second closest point of $p_{j}$ lies on boundary of $D_{j}$ or outside of $D_{j}$. }\label{figure7}   			
	\end{figure}
Therefore, 	Algorithm \ref{algo} produces a $2$-factor approximation result for the $1$-dispersion problem in $\mathbb{R}^2$.
\end{proof}

\section{Conclusion } \label{section5}
In this article, we proposed a $(2\sqrt3+\epsilon)$-factor approximation algorithm for the $2$-dispersion problem in $\mathbb{R}^2$, where $\epsilon > 0$. The best known approximation factor available in the literature is $4\sqrt3$ \cite{amano2020}. Next, we proposed a common framework for the dispersion problem. Using the framework, we  further improved the approximation factor  to  $2\sqrt{3}$ for the $2$-dispersion problem in $\mathbb{R}^2$. We studied the $2$-dispersion problem on a line  and proposed a polynomial time algorithm that returns an optimal solution using the developed framework. Note that, for the $2$-dispersion problem on a line,  one can propose a polynomial time algorithm  that returns an optimal value in relatively low time complexity, but to show the adaptability and flexibility of our proposed framework, we presented an algorithm for the same problem using the developed framework. We also proposed a $2$-factor approximation algorithm for the $1$-dispersion problem using the proposed common framework to show effectiveness of the framework.

%---------------------------- Bibliography -------------------------------

% Please add the contents of the .bbl file that you generate,  or add bibitem entries manually if you like.
% The entries should be in alphabetical order
\small
\bibliographystyle{abbrv}

%\newpage
%\section*{Appendix}

\end{document}